\newtheorem{theorem}{Theorem}[section]
\newtheorem{fact}[theorem]{Fact}
\def\eqref#1{equation~\ref{#1}}
\def\1{\bm{1}}
\def\eps{{\varepsilon}}
\DeclareMathAlphabet{\mathsfit}{\encodingdefault}{\sfdefault}{m}{sl}
\SetMathAlphabet{\mathsfit}{bold}{\encodingdefault}{\sfdefault}{bx}{n}
\def\gD{{\mathcal{D}}}
\def\gH{{\mathcal{H}}}
\def\gN{{\mathcal{N}}}
\def\gX{{\mathcal{X}}}
\def\gY{{\mathcal{Y}}}
\newcommand{\R}{\mathbb{R}}
\DeclareMathOperator{\sign}{sign}
\algrenewcommand\algorithmiccomment[1]{\hfill\textcolor{blue}{/* #1 */}}
\newcommand{\mycomment}[1]{\hspace{\algorithmicindent}\hfill\textcolor{blue}{/* #1 */}}
\newcommand{\fu}{f_\textrm{unlearn}}
\newcommand{\fr}{f_\textrm{retrain}}
\newcommand{\ft}{f_\textrm{target}}
\newcommand{\Dt}{\mathcal{D}_\textrm{train}}
\newcommand{\Du}{\mathcal{D}_\textrm{forget}}
\newcommand{\Dr}{\mathcal{D}_\textrm{retain}}
\newcommand{\Dh}{\mathcal{D}_\textrm{holdout}}
\newcommand{\unlearn}{\mathsf{U}}
\newcommand{\learn}{\mathsf{L}}
\newcommand{\GA}{\textsf{GA}\xspace}
\newcommand{\GDR}{\textsf{GDR}\xspace}
\newcommand{\KLR}{\textsf{KLR}\xspace}
\newcommand{\GD}{$\textsf{GA}_\GDR$\xspace}
\newcommand{\GKL}{$\textsf{GA}_\KLR$\xspace}
\newcommand{\advX}{X_\text{forget}^\text{adv}}
\newcommand{\advT}{T_\text{adv}}
\newcommand{\advlr}{\eta_\text{adv}}
\newcommand{\NoiseCandidates}{\mathrm{NoiseCandidates}}
\newcommand{\Noise}{\mathbf{z}}
\newcommand{\Xforget}{X_{\text{forget}}}
\newcommand{\DAccRetain}{\Delta \text{Acc}_{\text{retain}}}
\newcommand{\DAccHoldout}{\Delta \text{Acc}_{\text{holdout}}}
\newcommand\mybox[2][]{\tikz[overlay]\node[fill=cyan!15,inner sep=1.5pt, anchor=text, rectangle, rounded corners=1mm,#1] {#2};\phantom{#2}}
\newcommand{\QOne}{\mybox{\textsf{Q1}}\xspace}
\newcommand{\QTwo}{\mybox{\textsf{Q2}}\xspace}
\def\Comments{0}  
\setlist[itemize]{parsep=1pt, itemsep=1pt, leftmargin=*}
\setlist[enumerate]{parsep=1pt, itemsep=1pt, leftmargin=*}
\definecolor{amethyst}{rgb}{0.6, 0.4, 0.8}
\definecolor{lemon}{RGB}{255,247,0}
\definecolor{maize}{RGB}{250,237,94}
\definecolor{mustard}{RGB}{255,219,89}
\definecolor{ocre}{RGB}{241,103,35}
\definecolor{Tangerine}{RGB}{253,128,8}
\definecolor{framegreen}{RGB}{153, 188, 133}
\definecolor{bggreen}{RGB}{235, 250, 228}
\definecolor{c0}{cmyk}{1,0.3968,0,0.2588} 
\definecolor{c1}{cmyk}{0,0.6175,0.8848,0.1490} 
\definecolor{c2}{cmyk}{0.1127,0.6690,0,0.4431} 
\definecolor{c3}{cmyk}{0.3081,0,0.7209,0.3255} 
\definecolor{c4}{RGB}{164, 16, 52}
\definecolor{orange}{HTML}{E66100}
\definecolor{bluex}{HTML}{0C7BDC}
\definecolor{yellow}{HTML}{FFC20A}
\definecolor{lightpurple}{HTML}{E6E6FA}
\definecolor{lightbluee}{HTML}{e8f4f8}
\definecolor{blush}{rgb}{0.87, 0.36, 0.51}
\definecolor{c5}{HTML}{EE4E4E}
\definecolor{gggggg}{HTML}{EFEFEF}
\definecolor{lightgray}{rgb}{0.83, 0.83, 0.83}
\definecolor{Gred}{RGB}{219, 50, 54}
\definecolor{Ggreen}{RGB}{60, 186, 84}
\definecolor{Gblue}{RGB}{72, 133, 237}
\definecolor{Gyellow}{RGB}{247, 178, 16}
\definecolor{ToCgreen}{RGB}{0, 128, 0}
\definecolor{myGold}{RGB}{231,141,20}
\definecolor{myBlue}{rgb}{0.19,0.41,.65}
\definecolor{myPurple}{RGB}{175,0,124}
\newcommand{\marker}[2]{\ensuremath{^{\textsc{#1}}_{\textsc{#2}}}}
\providecommand{\Comments}{1}
\newcommand{\mytodo}[1]{\ifnum\Comments=1{#1}\fi}
\newcommand{\pritishimp}[1]{\ifnum\Comments<4\todo[linecolor=Gred,backgroundcolor=Gred!25,bordercolor=Gred]{\marker PK: #1}\fi}
\newcommand{\pritish}[1]{\ifnum\Comments<3\todo[linecolor=Gblue,backgroundcolor=Gblue!25,bordercolor=Gblue]{\marker PK: #1}\fi}
\newcommand{\pasin}[1]{\ifnum\Comments<3\todo[linecolor=Gblue,backgroundcolor=Gblue!25,bordercolor=Gblue]{\marker PM: #1}\fi}
\newcommand{\pritishinfo}[1]{\ifnum\Comments<2\todo[linecolor=Ggreen,backgroundcolor=Ggreen!25,bordercolor=Ggreen]{\marker PK: #1}\fi}
\newcommand{\tableoftodos}{\ifnum\Comments=1 \listoftodos[Comments/To Do's] \fi}
\definecolor{chart}{HTML}{1f77b4}
\newtcolorbox{example}[1][]{
  colback=chart!5!white,
  colframe=chart,
  floatplacement=floating,
  title=\centering \textsf{\small #1}
}
\newtcbox{\hlprimarytab}{on line, box align=base, colback=BlueGreen!20,colframe=blue,size=fbox,arc=3pt, before upper=\strut, top=-2.5pt, bottom=-4.5pt, left=-2pt, right=-2pt, boxrule=0pt}
\newtcbox{\hlsecondarytab}{on line, box align=base, colback=WildStrawberry!10,colframe=orange,size=fbox,arc=3pt, before upper=\strut, top=-2.5pt, bottom=-4.5pt, left=-2pt, right=-2pt, boxrule=0pt}
\newtcbox{\hlwhite}{on line, box align=base, colback=WildStrawberry!8,colframe=white,size=fbox,arc=2pt, before upper=\strut, top=-3pt, bottom=-4.5pt, left=-2pt, right=-2pt, boxrule=0pt}
\newtcbox{\hlyellow}{on line, box align=base, colback=BlueGreen!10,colframe=white,size=fbox,arc=2pt, before upper=\strut, top=-3pt, bottom=-4.5pt, left=-2pt, right=-2pt, boxrule=0pt}
\crefname{section}{\S}{\S}
\crefname{appendix}{\S}{\S}
\title{Unlearn and Burn: Adversarial Machine Unlearning Requests Destroy Model Accuracy}
\author{Yangsibo Huang$^{1,2}$\thanks{Equal contribution (\url{yangsibo@google.com}, \url{dgliu@uw.edu}). The authors after the first two are listed in alphabetical order. Code is available at \url{https://github.com/daogaoliu/unlearning-under-adversary}.} \quad  Daogao Liu$^{3*}$ \quad Lynn Chua$^1$ \quad Badih Ghazi$^1$ \quad Pritish Kamath$^1$  \\ 
\textbf{Ravi Kumar}$^1$ \quad \textbf{Pasin Manurangsi}$^1$ \quad \textbf{Milad Nasr}$^1$
 \quad \textbf{Amer Sinha}$^1$ \quad \textbf{Chiyuan Zhang}$^1$\\
 $^1$Google \quad $^2$Princeton University \quad $^3$University of Washington\\
}
\begin{document}

\maketitle

\begin{abstract}
Machine unlearning algorithms, designed for selective removal of training data from models, have emerged as a promising approach to growing privacy concerns. In this work, we expose a critical yet underexplored vulnerability in the deployment of unlearning systems: the assumption that the data requested for removal is always part of the original training set. We present a threat model where an attacker can degrade model accuracy by submitting adversarial unlearning requests for data \textit{not} present in the training set. We propose white-box and black-box attack algorithms and evaluate them through a case study on image classification tasks using the CIFAR-10 and ImageNet datasets, targeting a family of widely used unlearning methods. Our results show extremely poor test accuracy following the attack—$3.6\%$ on CIFAR-10 and $0.4\%$ on ImageNet for white-box attacks, and $8.5\%$ on CIFAR-10 and $1.3\%$ on ImageNet for black-box attacks. Additionally, we evaluate various verification mechanisms to detect the legitimacy of unlearning requests and reveal the challenges in verification, as most of the mechanisms fail to detect stealthy attacks without severely impairing their ability to process valid requests. These findings underscore the urgent need for research on more robust request verification methods and unlearning protocols, should the deployment of machine unlearning systems become more prevalent in the future.
\end{abstract}
\section{Introduction}

\begin{figure}[h]
    \centering
    \includegraphics[width=\linewidth]{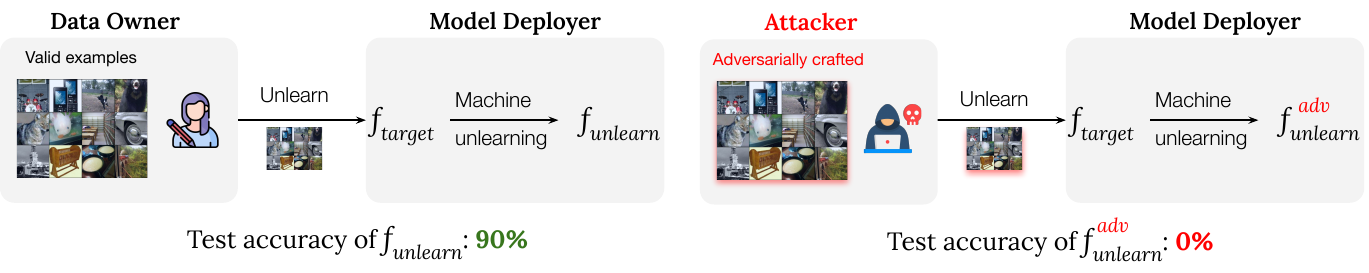}
    \vspace{-5mm}
    \caption{Machine unlearning allows data owners to remove their training data from a target model without compromising the unlearned model’s accuracy on examples not subject to unlearning requests, such as test data (left). However, we demonstrate that adversarially crafted unlearning requests, though visually similar to legitimate ones, can lead to a catastrophic drop in model accuracy after unlearning (right).
    }
    \label{fig:enter-label}
\end{figure}

Machine unlearning~\citep[e.g.,][]{cao2015towards,garg2020formalizing,cohen2023control} is a notion formulated to address a critical challenge in contemporary machine learning systems: the selective removal from trained models of information pertaining to a given subset of training examples. This capability has become increasingly important as machine learning models are frequently trained on large datasets, which often unintentionally include private \citep{carlini2021extracting} and copyrighted \citep{henderson2023foundation, lee2024talkin, he2024fantastic, wei2024evaluating} material. The need to ``unlearn'' specific data points is not merely a technical challenge; it is also a response to escalating privacy concerns and evolving legal frameworks, such as the General Data Protection Regulation \citep[GDPR,][]{GDPR2016a}.

At its core, the goal of machine unlearning is to provide a protocol for data owners to request the removal of their data from a model. Specifically, let $\Dt$ be the training set, and $\ft\coloneqq\learn(\Dt)$ be a model returned by the learning algorithm $\learn$. A machine unlearning algorithm $\unlearn$ takes the trained model $\ft$ and a \emph{forget set} $\Du \subset \Dt$, and produces a new model $\fu\coloneqq\unlearn(\ft, \Du)$ that is not influenced by $\Du$. The most straightforward unlearning algorithm is to retrain from scratch, i.e., to compute $\fr\coloneqq\learn(\Dt\setminus \Du)$. However, this is generally impractical not only because it requires saving an entire copy of $\Dt$, but also because the cost of training from scratch has become prohibitively expensive for many modern neural network models. As a result, there have been numerous efforts in the literature proposing novel unlearning algorithms~\citep{ginart2019making, liu2020federated, wu2020deltagrad, bourtoule2021machine, izzo2021approximate, gupta2021adaptive, sekhari2021remember, ye2022learning}, some of which achieve better practical efficiency by relaxing the strong guarantees required in exact unlearning. This has led to extensive research and a continuous evolution of the formulation of the unlearning protocol that can better accommodate such considerations.

\textbf{Our contributions.}
In this paper, we continue this line of research by systematically studying one critical component of the current unlearning protocol. Specifically, the interface of an unlearning algorithm $\unlearn$ is defined as a mapping from $(\ft, \Du)$ to $\fu$, where it is assumed that $\Du \subset \Dt$. However, it is rarely considered (\QOne) \textsl{\boldmath what could go wrong when the assumption is violated (i.e., $\Du\not\subset\Dt$)}, and (\QTwo) \textsl{is it possible to robustly verify this assumption via defensive mechanisms}. Both of these questions have significant implications for the real-world deployment of unlearning systems; 
addressing them will reveal potential risks of current designs.
We evaluate various ways to improve the unlearning protocol that can potentially mitigate these risks. 
Specifically, we formulate a threat model and propose concrete attack algorithms in \Cref{sec:threat_mode_and_method}; we study \QOne in \Cref{sec:exp}; and study \QTwo in \Cref{sec:defense}. To summarize, our main contributions are:

\begin{itemize}[parsep=1pt, topsep=0.5pt]
    \item We identify a critical assumption ($\Du \subset \Dt$) in machine unlearning protocols that is generally overlooked in the literature, and formulate a threat model (\Cref{subsec:threat_model}) where an attacker can compromise model performance by exploiting this assumption, i.e., submitting unlearning requests that do not belong to the original training set.
    \item Inspired by meta-learning~\citep{finn2017model}, we propose an attack algorithm (\Cref{subsec:method}) to identify adversarial perturbations of the forget set examples by computing gradients through the unlearning updates using Hessian-vector Product~\citep{dagreou2024howtocompute}. We further extend our algorithm to the black-box setting with zero-th order gradient estimation. 
    \item To answer \QOne, we evaluate our attack algorithms on three mainstream unlearning algorithms for image classification tasks with the state-of-the-art base models on the CIFAR-10 and ImageNet datasets (\Cref{sec:exp}). Our results show extremely poor test accuracy following the attack—$3.6\%$ on CIFAR-10 and $0.4\%$ on ImageNet for white-box attacks, and $8.5\%$ on CIFAR-10 and $1.3\%$ on ImageNet for black-box attacks.
    Moreover, we show that the identified adversarial forget sets can sometimes transfer across models. 
    \item To answer \QTwo, we consider attack algorithms with an additional stealth objective that submit unlearning requests that are only slightly perturbed from true training images. We evaluate six verification schemes and found that none of them can effectively filter out malicious unlearning requests without severely compromising the ability to handle benign  requests (\Cref{sec:defense}).
    \item We provide  discussion and ablation studies of different variations of  threat models  (\Cref{sec:discussion}). 
    \item We present theoretical insights for the proposed attack (\Cref{app:theory}), 
    where we construct a formal setting of unlearning for linear models, and prove the existence of an attack wherein the unlearned model on a $\Du$ set  consisting of slightly perturbed examples misclassifies all examples in $\Dr$.
\end{itemize}

Overall, our findings highlight the pressing need for stronger request verification schemes in machine unlearning protocols, especially as their real-world deployment may become more prevalent in the future.
\section{Machine Unlearning with Adversarial Requests}
\label{sec:threat_mode_and_method}

In this section, we formulate a threat model (\Cref{subsec:threat_model}) in which the violation of the assumption $\Du\subset\Dt$ is not verified when fulfilling an unlearning request, and propose attack methods that generate malicious unlearning requests (\Cref{subsec:method}).

\subsection{Threat Model}
\label{subsec:threat_model}
\vspace{-2mm}

We focus on a data owner-side adversary who submits unlearning requests with the intent of causing performance degradation in the unlearned model. The adversary’s goal and capabilities are detailed below.

\textbf{Adversary's goal.} The attacker aims to generate a set $\Du^{\textrm{adv}}$ 
 of strategically crafted examples with a crucial property that we refer to as \textit{performance-degrading}: After undergoing the unlearning process, these examples result in an unlearned model, $\fu^{\textrm{adv}}$, with significantly worse performance (measured by metrics like accuracy in classification tasks) compared to a non-maliciously generated unlearned model $\fu$, when evaluated on data not intended for removal, such as the retain set $\Dr$ or any held-out dataset $\Dh$. The performance degradation can occur in two ways: 
\begin{itemize}[parsep=1pt, topsep=1pt]
    \item \textit{General}, where the degradation affects all examples.
    \item \textit{Targeted}, similar to a backdoor attack~\citep{chen2017targeted}, where the model's performance is degraded on a specific subset of examples (e.g., a specific class in classification tasks).
\end{itemize}

Optionally, another potentially desirable property is being \textit{stealthy}, meaning that the crafted examples closely resemble valid training data. This would make it harder for model deployers to detect without implementing careful detection mechanisms. We leave consideration of the stealthy property to the discussion of detection mechanisms in \Cref{sec:defense}. 

\textbf{Threat model.} We assume the attacker has access to a subset of the training data, $\Dt$\footnote{This is a reasonable assumption, as it would naturally apply if the attacker is a data owner who has already contributed to training.}. We also assume that the attacker knows the unlearning algorithm used by the model. In \Cref{sec:discussion}, we discuss relaxing both assumptions and demonstrate that the attack can still be very effective even if the attacker lacks access to $\Dt$ or lacks full knowledge of the unlearning algorithm. 

Regarding the attacker's knowledge of or access to the model, we consider two settings:
\begin{itemize}[parsep=1pt, topsep=1pt] 
\item \textit{White-box:} The attacker has full access to the model, allowing them to perform back-propagation through the model.
\item \textit{Black-box:} The attacker has query-only access to the model's loss on a set of arbitrarily chosen examples, without knowledge of the model weights or architecture.
\end{itemize}

\subsection{White-box and Black-box Attack Methods}
\label{subsec:method}

The attacker's goal is to identify an adversarial forget set $\advX$ that, when fed into the unlearning algorithm, could maximize the degradation of the unlearned model's performance on the retain set. We propose an attacking framework to find $\advX$ with gradient-based local search. Specifically, let $g(\advX):=\mathcal{L}_{\text{retain}}(\unlearn(\ft,\{\advX,y_{\text{forget}}\},\Dr))$ denote the retain loss after we run the unlearning algorithm $\unlearn$ with $\advX$ as the forget set.  We run gradient ascent on $g(\advX)$ to identify the (local) optimal $\advX$. The main challenge is to compute gradient through the execution of the unlearning algorithm $\unlearn$. In the following, we describe how to do this in two different access models.

\textbf{White-box attack.} 
With full access to the model, a white-box attacker can (hypothetically) construct a computation graph that realizes the unlearning update $\unlearn$ to the underlying model, and compute the gradient $\nabla g(\advX)$ by back-propagating through this computation graph. Note that many common unlearning algorithms $\unlearn$ are implemented with gradient updates as well, therefore this hypothetical computation graph unrolling can be realized via \emph{gradient-through-gradient}, which is supported by most modern deep learning toolkits\footnote{In this paper, we use the \href{https://github.com/facebookresearch/higher}{higher} library designed for PyTorch to implement this computation.} with the auto differentiation primitive of \emph{Hessian-vector Products}~\citep{dagreou2024howtocompute}. We note that a similar technique has been used in meta-learning algorithms such as MAML~\citep{finn2017model}. However, the setup is quite different as MAML operates in the \emph{weight}  space, while we search in the model \emph{input}  space. 
Specifically, MAML tries to identify the best base model that could minimize the target loss when a fine-tuning algorithm is executed, whereas our algorithm tries to identify the best unlearning inputs that could maximize the target loss when an unlearning algorithm is executed, based on the unlearning inputs.

A formal description is shown in \Cref{alg:white_box}. For simplicity, we initialize the gradient-based attack with a valid forget set (Line~2  in \Cref{alg:white_box}). However, as we later demonstrate in \Cref{sec:discussion}, the attacker could start with an arbitrary collection of datapoints, even random pixels.

\begin{algorithm}[H]
\small
\begin{algorithmic}[1]
\caption{White-box attack.}
\label{alg:white_box}
 \State {\bf Input:} original model $\ft$, a collection $X \subset \Dt$ of training examples, retain set $\Dr$, unlearning method $\unlearn$, 
 attack step size $\advlr$, optimizing steps $\advT$,  
 access to loss function $g(\advX) := \mathcal{L}_{\mathrm{retain}}(\unlearn(\ft, \{\advX, y_{\text{forget}}\}, \Dr))$, i.e., the loss over $\Dr$ after unlearning with $\advX$.

\State Initialize $\advX \gets X$ 
    \For{$t = 1 \to \advT$}
         \State Compute gradient $\nabla g(\advX)$ \label{ln:compute_gradient}\mycomment{\small Using the gradient-through-gradient technique}
        \State $\advX \gets \advX + \advlr \nabla g(\advX)$ \mycomment{\small Update adversarial forget set to maximize retain loss} 
    \EndFor
    \State \textbf{Return:} $(\advX, y_{\text{forget}})$
\end{algorithmic}
\end{algorithm}

\textbf{Black-box attack.} In the black-box setting, the attacker can only access the value of the loss $g(\advX)$ and cannot directly compute its gradients. Consequently, instead of computing the gradient explicitly, we use a gradient estimator from the zeroth-order optimization literature~\citep{duchi2015optimal,nesterov2017random}, namely, we draw random noise $\Delta$ (of unit length for simplicity), and compute an estimate of the gradient as:
\begin{align*}
    \nabla g(\advX)\approx \frac{g(\advX+\Delta)-g(\advX-\Delta)}{2}\Delta.
\end{align*}
Furthermore, we implement the following heuristic to improve the local search: 
(i) Instead of directly optimizing for $\advX$, given a benign forget set $X$, we search for an adversarial perturbation $\mathbf{z}$ and let $\advX=X+\mathbf{z}$;
(ii) We simultaneously optimize $m$ different randomly initialized adversarial perturbations and return the best one at the end of the algorithm;
(iii) In each step, we independently sample $p$ random $\Delta$'s for gradient estimation for each of the $m$ perturbations. This results in $mp$ updated perturbations (some bad ones are removed, Line 18 of \Cref{alg:black_box}), and we remove all but the top-$m$ of them at the end of the step. The full description is shown in \Cref{alg:black_box}.

\begin{algorithm}[H]
\small
    \begin{algorithmic}[1]
    \caption{Black-box attack. 
    } 
    \label{alg:black_box}
    
    \State {\bf Input:} original model $\ft$, a collection $X \subset \Dt$ of training examples, retain set $\Dr$, unlearning method $\unlearn$, attack step size $\advlr$, optimization steps $\advT$, access to loss function $g(\advX) := \mathcal{L}_{\mathrm{retain}}(\unlearn(\ft, \{\advX, y_{\text{forget}}\}, \Dr))$, hyperparameters $p,m$.
    
    \State Initialize $\advX \gets X$
 and initialize a random noise candidate set $\NoiseCandidates$ of size $1$ \label{ln:initialization_black_box}
    
    \For{$t=1,\cdots,\advT$}
        \For{$\Noise \in \NoiseCandidates$}
            \For{$i=1,\dots,p$}
            \State $\Noise' \gets \textsc{EstimateGradients}(\Noise, g)$ \Comment{\small Call to gradient estimation procedure}
            \State Append $\Noise'$ to $\NoiseCandidates$
            \EndFor
        \EndFor
        \State Keep the top $m$ noises in $\NoiseCandidates$ (based on loss function $g$)  
    \EndFor
    
    \State Choose the best $\Noise$ in $\NoiseCandidates$
    \State {\bf Return:} $(\Xforget+\Noise, y_{\text{forget}})$

    \Statex
    \Procedure{EstimateGradients}{$\Noise,g$} \Comment{\small Estimate the gradient and update noise}
            \State Draw random unit noise $\Delta$
            \State Compute $g(\Xforget + \Noise + \Delta)$ and $g(\Xforget + \Noise - \Delta)$
            \If{$g(\Xforget + \Noise + \Delta) \le g(\Xforget + \Noise)$ \textbf{and} $g(\Xforget + \Noise - \Delta) \le g(\Xforget + \Noise)$}
                \State \textbf{Return:} $\emptyset$ \Comment{\small If no improvement, skip this estimator}
            \EndIf
            \State Estimate gradient $\nabla g(\Noise) \gets \frac{g(\Xforget + \Noise + \Delta) - g(\Xforget + \Noise - \Delta)}{2} \Delta$
            \State Update noise: $\Noise' \gets \Noise + \advlr \nabla g(\Noise)$
        \State {\bf Return:} updated noise $\Noise'$
    \EndProcedure

\end{algorithmic}
\end{algorithm}

\section{Experiments}
\label{sec:exp}

We describe our experimental setup in \Cref{subsec:exp_setup} and present results for white-box and black-box attacks in \Cref{subsec:exp_whitebox} and \Cref{subsec:exp_blackbox}, respectively.

\subsection{Experimental Setup}
\label{subsec:exp_setup}

\textbf{Datasets and models.} We evaluate the proposed attack on image classification tasks, one of the most common applications of machine unlearning \citep{bourtoule2021machine,graves2021amnesiac,gupta2021adaptive,chundawat2023zero,tarun2023fast}.
We perform experiments across two testbeds:
\begin{itemize}[nosep]
    \item \textbf{CIFAR-10:} We use the model provided by the Machine Unlearning Challenge at NeurIPS 2023~\citep{neurips2023machineunlearning} as the target model $\ft$. This model is a ResNet-18~\citep{he2016deep} trained on the CIFAR-10 dataset~\citep{krizhevsky2009learning}. We randomly select examples from the CIFAR-10 training set to form the forget set $\Du$, while the rest of the training data forms the retain set $\Dr$. The CIFAR-10 test set is used as $\Dh$.
    \item \textbf{ImageNet:} We construct a larger-scale testbed using ImageNet. Here, we use a ResNeXt-50 model~\citep{xie2017aggregated} pretrained on ImageNet~\citep{deng2009imagenet} as the target model $\ft$. Similar to CIFAR-10, we randomly select examples from the ImageNet training set to create the forget set $\Du$, with the remaining data forms $\Dr$. The ImageNet test set is used as $\Dh$. 
\end{itemize}

\textbf{Unlearning algorithms.} Our evaluation focuses on \textit{Gradient Ascent} (\GA) and two of its variants, which perform unlearning by continuing to train on the forget set (and optionally the retain set). The main difference among these methods lies in their objective functions:
\begin{itemize}[nosep]
    \item Gradient Ascent (\GA). \GA maximizes the cross-entropy loss on the forget set $\Du$, denoted by $\mathcal{L}_\text{forget}$. \GA is notably one of the most popular unlearning algorithms, as demonstrated by its use as a baseline in the Machine Unlearning Challenge at NeurIPS 2023 ~\citep{neurips2023machineunlearning}.
    \item Gradient Ascent with Gradient Descent on the Retain Set (\GD;
\citealp{liu2022continual, Maini2024TOFUAT, zhang2024negative}).
\GD minimizes 
the retain loss after removing the forget loss,
denoted by $-\mathcal{L}_\text{forget}+\mathcal{L}_\text{retain}$, where $\mathcal{L}_\text{retain}$ is the cross-entropy of the retain set $\Dr$. The motivation is to train the model to maintain its performance on $\Dr$.
\item Gradient Ascent with KL Divergence Minimization on the Retain Set (\GKL;
\citealp{Maini2024TOFUAT, zhang2024negative}). \GKL encourages the unlearned model's probability distribution $p_{\fu}(\cdot | x)$ to be close to the target model's distribution $p_{\ft}(\cdot | x)$ on inputs from the retain set $x\in\Dr$. 
Specifically, the objective loss to minimize is $-\mathcal{L}_\text{forget}+\mathrm{KL}_\text{retain}$, where $\mathrm{KL}_\text{retain}$ is the KL Divergence between $p_{\fu}(\cdot | x)$ and $p_{\ft}(\cdot | x)$ with $x$ from the retain set.
\end{itemize}

For all unlearning algorithms, we use SGD as the optimizer, with a momentum of $0.9$ and a weight decay of $5 \times 10^{-4}$. The (un)learning rate is set to $0.02$ for CIFAR-10 and $0.05$ for ImageNet. Each unlearning process is run with a batch size of $128$ for a single epoch.\footnote{Note that if the forget set size is smaller than 128, this would be equivalent to a single-step unlearning.}

\textbf{Measuring attack performance.}  We quantify attack effectiveness by measuring the accuracy degradation introduced by the adversarial forget set on the unlearned model. For a given forget set $\Du$, we create two models: (i) $\fu$, by applying the unlearning algorithm to $\Du$, and (ii) $\fu^{\text{adv}}$, by generating an adversarial forget set $\Du^\text{adv}$ of the same size using our attack, and applying the unlearning algorithm to $\Du^\text{adv}$. We quantify the attack's performance as the maximum accuracy degradation observed across a grid search of hyperparameters, defined as $\Delta \text{Acc}_{\mathcal{D}} := \max_{\lambda} \left(\text{Acc}(\fu; \mathcal{D}) - \text{Acc}(\fu^{\text{adv}}; \mathcal{D})\right)$, where $\lambda$ denotes the attack's hyperparameters (detailed in \Cref{app:details}), and $\mathcal{D}$ represents the evaluation dataset. In our evaluation, for general attacks, we report the accuracy drop on the retain and holdout sets, referred to as $\DAccRetain$ and $\DAccHoldout$. For targeted attacks, we report the accuracy drop on a specific set of examples.

We vary the size of the forget set, $|\Du|$, from $10$ to $100$ in our evaluation. To ensure the statistical significance of our results, for each evaluated $|\Du|$, we randomly select five different subsets for $\Du$ and report both the maximum and mean measurements. The runtime of our attack is provided in \Cref{app:details}.

\subsection{White-box Attack}
\label{subsec:exp_whitebox}

\begin{table}[t]
    \vspace{-6mm}
    \footnotesize
    \setlength{\tabcolsep}{2pt}
    \renewcommand{\arraystretch}{0.8}
    \centering
    \caption{Accuracy degradation of the unlearned model under white-box attacks on $\Dr$ and $\Dh$ across varying sizes of forget set $\Du$. For each size, $5$ random forget sets were sampled, and hyperparameter search was conducted to optimize attack performance. The table reports the maximum, mean, and standard deviation of the accuracy drop across these sets. Consistent and significant accuracy degradation is observed.
    }
    \label{tab:whitebox}
    \vspace{-2mm}
    \begin{tabularx}{0.9\textwidth}{
    l|
    >{\centering}X
    >{\centering}X
    >{\centering}X
    |>{\centering}X
    >{\centering}X
    >{\centering\arraybackslash}X
    }
    \toprule
    \multirow{2}{*}{\bf $\mathbf{|\Du|}$} & \multicolumn{3}{c|}{\bf $\DAccRetain$: Accuracy drop on $\mathbf{\Dr}$ ($\%$)} & \multicolumn{3}{c}{\bf $\DAccHoldout$: Accuracy drop on $\mathbf{\Dh}$ ($\%$)}\\
      & {\scriptsize Max} & {\scriptsize Mean} & {\scriptsize Std} & {\scriptsize Max} & {\scriptsize Mean} & {\scriptsize Std} \\
    \midrule
    \multicolumn{7}{c}{\bf CIFAR-10} \\
    \midrule
    $10$ & $93.73$ & $90.68$ & $2.76$ & $82.26$ & $80.20$ & $2.63$ \\
    $20$ & $95.12$ & $93.23$ & $1.13$ & $84.66$ & $82.29$ & $1.31$ \\
    $50$ & $94.37$ & $91.92$ & $1.25$ & $83.26$ & $80.84$ & $1.34$ \\
    $100$ & $95.45$ & $92.39$ & $1.65$ & $84.40$ & $81.47$ & $1.67$ \\
    \midrule
    \multicolumn{7}{c}{\bf ImageNet} \\
    \midrule
    $10$ & $70.56$ & $31.98$ & $26.79$ & $78.40$ & $34.57$ & $35.16$ \\
    $20$ & $96.37$ & $79.63$ & $21.25$ & $86.88$ & $69.32$ & $21.66$ \\
    $50$ &  $96.81$ & $94.79$ & $2.07$ & $88.64$ & $86.23$ & $2.25$ \\
    $100$ & $96.88$ & $96.68$ & $0.26$ & $88.68$ & $88.30$ & $0.52$ \\
    \bottomrule
    \end{tabularx}
    \vspace{-3mm}
\end{table}

We start with results for the white-box attack described in \Cref{alg:white_box}.

\textbf{Accuracy degradation consistently reaches $\mathbf{90\%}$ across varying sizes of the forget set.}  As shown in \Cref{tab:whitebox}, the unlearned model experiences substantial accuracy degradation under the white-box attack. The maximum accuracy drop on the retain set $\Dr$ consistently hits around $90\%$ for both CIFAR-10 and ImageNet (\Cref{app:res_attack_hparam} details the attack's performance across different hyperparameters). However, when the forget set is smaller (e.g., $|\Du| \leq 10$), the variability in accuracy degradation increases across forget sets sampled with different random seeds. This variability likely arises from the unlearning algorithm's sensitivity when dealing with a small forget set. In other words, this reveals a new vulnerability where an attacker might experiment with different forget set sizes to identify the most effective one for compromising the unlearning pipeline.

For reference, \Cref{fig:vis} visualizes the forget sets before and after the attack on CIFAR-10 and ImageNet. The changes in adversarial forget sets are visually minimal. 
\begin{figure}[ht]
    \centering
    \begin{subfigure}[t]{0.44\textwidth}
        \centering
        \includegraphics[width=0.92\linewidth]{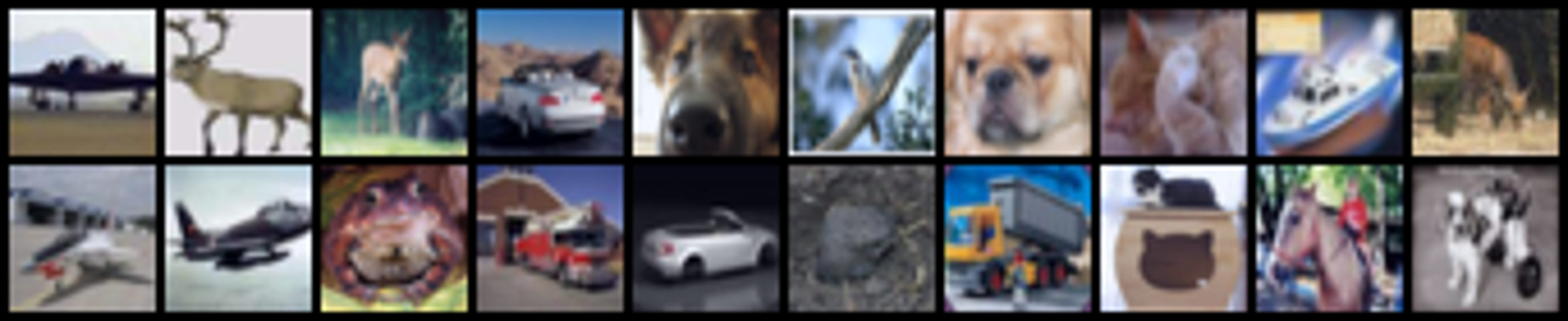}
        \caption{CIFAR-10, $\Du$, Retain accuracy: $99.44\%$}
    \end{subfigure}
    \hspace{3mm}
    \begin{subfigure}[t]{0.44\textwidth}
        \centering
        \includegraphics[width=0.92\linewidth]{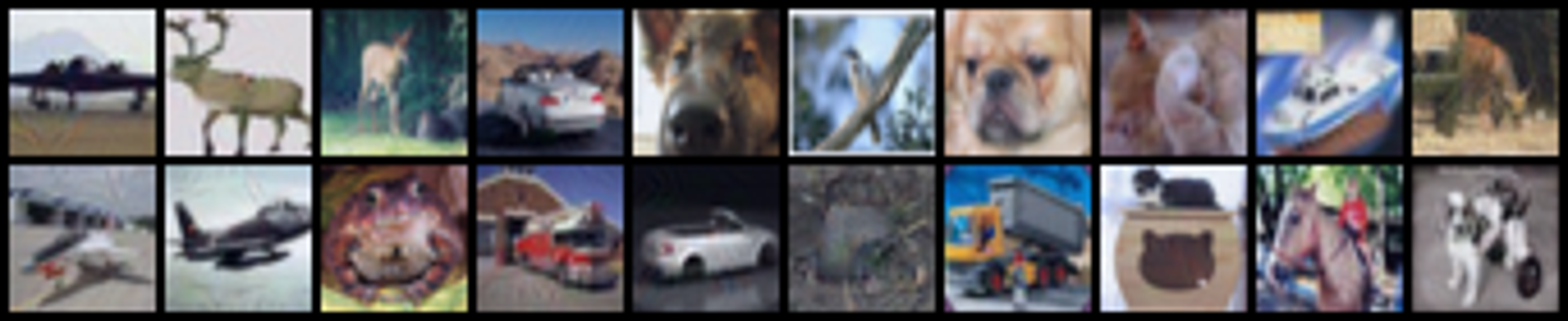}
        \caption{CIFAR-10, $\Du^\text{adv}$, Retain accuracy: $4.32\%$}
    \end{subfigure}
    
    \begin{subfigure}[t]{0.44\textwidth}
        \centering
        \includegraphics[width=0.92\linewidth]{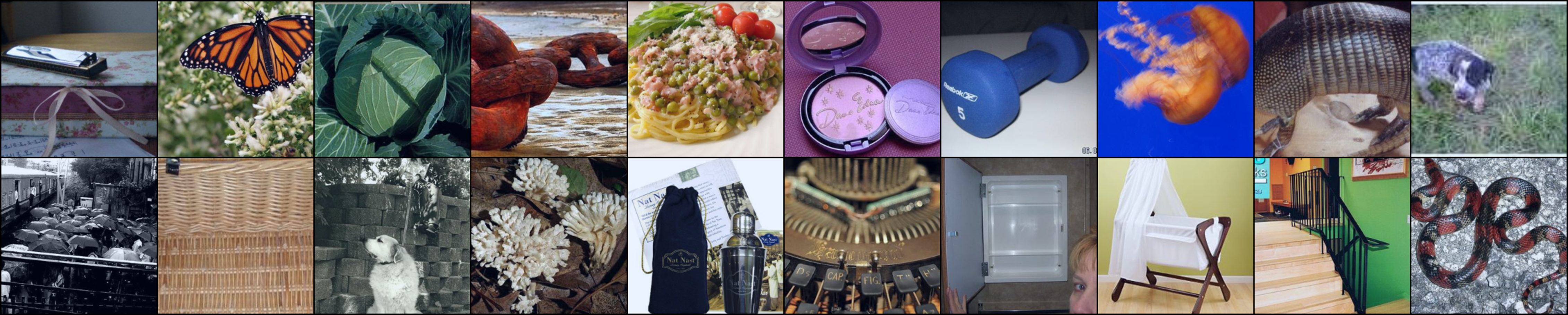}
        \caption{ImageNet, $\Du$, Retain accuracy: $96.41\%$}
    \end{subfigure}
    \hspace{3mm}
    \begin{subfigure}[t]{0.44\textwidth}
        \centering
        \includegraphics[width=0.92\linewidth]{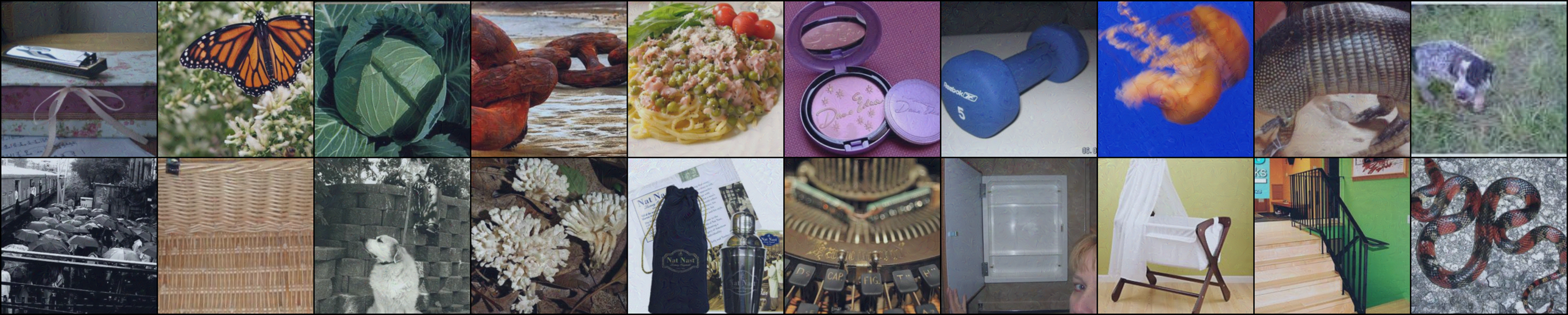}
        \caption{ImageNet, $\Du^\text{adv}$, Retain accuracy: $0.04\%$}
    \end{subfigure}
    \vspace{-2mm}
    \caption{Visualization of the original forget set $\Du$ (a, c) and adversarial forget sets $\Du^\text{adv}$ (b, d) for CIFAR-10 and ImageNet. Although the adversarial forget sets appear visually similar to the original valid forget set, they lead to catastrophic accuracy failure in the unlearned model.
    }
    \label{fig:vis}
    \vspace{-4mm}
\end{figure}

\textbf{Theoretical insights.} In \Cref{app:theory}, we provide theoretical insights into our empirical findings. Specifically, we demonstrate that in a stylized setting of unlearning for linear models, there exists an attack wherein the predictor returned by a \GA-like unlearning algorithm applied on a small perturbation of $\Du$ misclassifies all examples in $\Dr$.

\begin{wrapfigure}{r}{6.5cm}
\vspace{-6mm}
    \centering
    \includegraphics[width=0.96\linewidth]{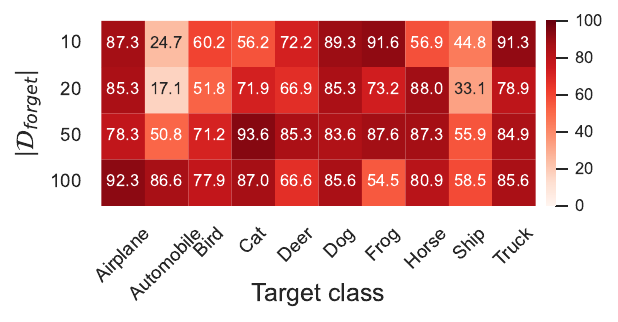}
    \vspace{-4mm}
    \caption{Accuracy drop ($\%$) across different targeted classes in CIFAR-10 after targeted attack, with impact on non-targeted classes kept $< 10\%$. 
    }
    \label{fig:backdoor}
    \vspace{-4mm}
\end{wrapfigure}
\textbf{Targeted attacks are also highly effective.} We then craft an adversarial forget set to induce a targeted accuracy drop in the unlearned model by modifying our attack algorithm to compute the target loss using examples from the same class. We apply this attack to CIFAR-10, and report the resulting accuracy drop across different targeted classes in \Cref{fig:backdoor} (with the accuracy drop on non-targeted classes kept below $10\%$). The attack consistently causes a significant accuracy degradation in targeted classes.

\begin{wraptable}{r}{6.6cm}
\small
\setlength{\tabcolsep}{14pt}
\renewcommand{\arraystretch}{0.75}
\caption{Accuracy drop (\%) on retain set of CIFAR-10 using adversarial forget sets, found in the shadow model and applied to target model. Adversarial forget sets show transferability across models.}
\label{tab:model_transfer}
\vspace{-2mm}
\begin{tabular}{c|ccc}
\toprule
     & {\bf Max} & {\bf Mean} & {\bf Std} \\
\midrule
     $10$ & $68.04$ & $48.36$ & $13.48$ \\
     $20$ & $37.41$ & $29.38$ & $7.93$ \\
     $50$ & $14.24$ & $11.63$ & $3.39$ \\
     $100$ & $3.91$ & $3.63$ & $0.30$ \\
\bottomrule
\end{tabular}
\vspace{-6mm}
\end{wraptable}
\textbf{Transferability across models.} We also explore the transferability of adversarial forget sets across different models. We consider two models: 
\vspace{-2mm}
\begin{itemize}[nosep]
    \item \textit{Target model}: the CIFAR-10 model from the unlearning competition~\citep{neurips2023machineunlearning}, with training details unknown.
    \item \textit{Shadow model}: a ResNet-18 model we train on CIFAR-10 using SGD for 50 epochs (learning rate $0.01$, momentum $0.9$).
\end{itemize}

We then generate the adversarial forget set on the shadow model and test its transferability to the target model. \Cref{tab:model_transfer} shows that these sets still cause a significant accuracy drop in the target model. This transferability is more effective with smaller forget sets.

\begin{wraptable}{r}{6.6cm}
\vspace{-4mm}
\footnotesize
\setlength{\tabcolsep}{6pt}
\renewcommand{\arraystretch}{0.8}
\caption{Maximum accuracy drop ($\%$) with (of 5 randomly selected forget sets for each size) under white-box attacks on $\Dr$. The attack is consistently effective across different unlearning methods.}
\label{tab:ablation}
\vspace{-2mm}
\begin{tabular}{c|cc|cc}
\toprule
 \multirow{2}{*}{\bf $\mathbf{|\Du|}$}& \multicolumn{2}{c|}{\bf CIFAR-10} & \multicolumn{2}{c}{\bf ImageNet}\\
 \cmidrule{2-5}
      & \GD & \GKL  & \GD & \GKL \\
\midrule
     $10$ & $91.91$ & $92.66$ &$93.95$ &$94.99$\\
     $20$ & $93.55$ & $94.01$ & $93.69$ & $93.71$\\
     $50$ & $93.45$ & $91.97$ & $96.62$ & $96.79$\\
     $100$ & $89.26$ & $92.79$ & $96.69$ & $96.67$\\
\bottomrule
\end{tabular}
\vspace{-4mm}
\end{wraptable}
\textbf{Generalization to other unlearning methods.} We also evaluate the effectiveness of our white-box attack against other unlearning methods, specifically \GD and \GKL (see \S \ref{subsec:exp_setup}). As shown in \Cref{tab:ablation}, our attack remains highly effective, even when the unlearning algorithm includes regularization terms: These terms are intended either to preserve the model’s performance on the retained set by penalizing any increase in loss (\GD) or to prevent the unlearned model from deviating too much from the original (\GKL).

\subsection{Black-box Attack}
\label{subsec:exp_blackbox}

In the black-box attack setting, the attacker only has query access to the model without knowing its weights. They must estimate the gradient of the retain loss after running the unlearning algorithm with adversarial examples to improve their generation of those examples (see \Cref{alg:black_box}). 

Despite this, as shown in \Cref{tab:blackbox}, the attack remains highly effective, causing up to an $89.6\%$ drop in retain accuracy on CIFAR-10 and $80.3\%$ on ImageNet. However, we also observe that the black-box attack is generally more effective with smaller forget sets. 
This observation is likely consistent with previous research on zeroth-order optimization, demonstrating that its performance deteriorates in high-dimensional settings, as shown by previous works (e.g., \citet{duchi2015optimal}).

\begin{table}[ht]
    \footnotesize
    \setlength{\tabcolsep}{2pt}
    \renewcommand{\arraystretch}{0.8}
    \centering
    \caption{Accuracy drop of the unlearned model under \textit{black-box} attacks on $\Dr$ and $\Dh$ across varying sizes of forget set $\Du$. For each size, $5$ random forget sets were sampled, and hyperparameter search was conducted to optimize attack performance. The table reports the maximum, mean, and standard deviation of the accuracy drop across these sets. }
    \label{tab:blackbox}
    \vspace{-3mm}
    \begin{tabularx}{0.9\textwidth}{
    l|
    >{\centering}X
    >{\centering}X
    >{\centering}X
    |>{\centering}X
    >{\centering}X
    >{\centering\arraybackslash}X
    }
    \toprule
    \multirow{2}{*}{\bf $\mathbf{|\Du|}$} & \multicolumn{3}{c|}{\bf $\DAccRetain$: Accuracy drop on $\mathbf{\Dr}$ ($\%$)} & \multicolumn{3}{c}{\bf $\DAccHoldout$: Accuracy drop on $\mathbf{\Dh}$ ($\%$)}\\
    & {\scriptsize Max} & {\scriptsize Mean} & {\scriptsize Std} & {\scriptsize Max} & {\scriptsize Mean} & {\scriptsize Std} \\
    \midrule
    \multicolumn{7}{c}{\bf CIFAR-10} \\
    \midrule
    $10$ & $42.06$ & $38.97$ & $4.36$  & $37.96$ & $35.60$ & $3.34$  \\
    $20$ & $35.01$ & $27.02$ & $11.31$ & $31.74$ & $24.65$ & $10.03$ \\
    $50$ & $20.32$ & $15.59$ & $6.68$ & $17.74$ & $14.20$ & $5.01$  \\
    $100$ & $15.30$ & $11.89$ & $2.49$ & $12.84$ & $10.13$ & $2.03$  \\
    \midrule
    \multicolumn{7}{c}{\bf ImageNet} \\
    \midrule
    $10$ & $80.32$ & $69.03$ & $16.64$ & $72.10$ & $62.42$ & $13.78$ \\
    $20$ & $92.83$ & $59.54$ & $42.86$ & $84.42$ & $53.82$ & $38.41$ \\
    $50$ & $94.61$ & $55.80$ & $45.55$ & $86.32$ & $50.41$ & $44.88$ \\
    $100$ & $56.04$ & $26.43$ & $29.55$ & $49.72$ & $23.68$ & $25.88$ \\
    \bottomrule
    \end{tabularx}
\end{table}

\section{Exploration of Defensive Mechanisms}
\label{sec:defense}

In this section, we explore potential defenses that model deployers could implement to detect adversarial unlearning requests. We consider two scenarios: (i) when the deployer cannot store all training examples due to practical limitations such as storage constraints (\Cref{subsec:defense-no-training-data}), and (ii) when the deployer has access to all training examples (\Cref{subsec:defense-training-data}).

\subsection{Defenses with Limited Access to Training Examples}
\label{subsec:defense-no-training-data}

When storage constraints prevent retaining full copies of training data, a possible defense is to store hashes or embeddings of the original examples instead and compare them to incoming unlearning requests; a request is considered invalid if its distance from any stored hash or embedding is above a threshold $\tau$.

\textbf{Evaluation setup.} We evaluate the effectiveness of these detection mechanisms using a forget set consisting of 100 CIFAR examples. Adversarial examples are generated either without constraints or by applying an $\ell_2$ norm limit (via projection onto the $\ell_2$ ball during each adversarial optimization step) to enhance stealthiness. To ensure the detection mechanism can also accommodate benign user requests, which may involve slight variations (e.g., a re-encoded version of the original image), we simulate benign modifications by re-encoding the image data with a standard JPEG encoder. Specifically, for each image in the forget set, we use a saved copy of it using standard JPEG encoder\footnote{We use the \href{https://github.com/python-pillow/Pillow}{Python Imaging Library}.} with different \texttt{quality} parameters\footnote{Note JPEG is a lossy image codec---data loss could occur even with \texttt{quality=100}. Commonly used value ranges balancing storage and quality are \texttt{70}--\texttt{80}. For example, \href{https://github.com/libjpeg-turbo/libjpeg-turbo}{libjpeg-turbo} set the default \texttt{quality=75} for the command line tool example.}. A lower \texttt{quality} value generally results in more information loss, making the encoded image less similar to the original pixel values. Therefore, allowing benign requests becomes more challenging at lower \texttt{quality} levels, as the mechanism must tolerate greater deviations from the original images while maintaining accuracy in detection.

We evaluate four image hashing methods—average hashing, perceptual hashing, difference hashing, and wavelet hashing~\citep{imagehash2024}—as well as one image embedding method, LPIPS~\citep{zhang2018unreasonable}. We report the AUROC across different distance thresholds $\tau$ for each method. Higher AUROC values indicate better detection of adversarial requests while maintaining tolerance for benign user perturbations.

\begin{figure}[t]
    \centering
    \begin{subfigure}[t]{\textwidth}
    \centering
    \includegraphics[width=\textwidth]{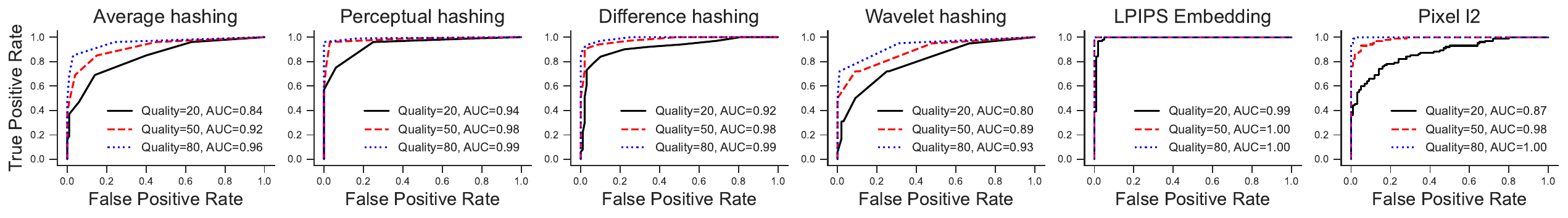}
    \caption{{\small Non-stealthy attack, $\DAccRetain: 92.8\%$}}
    \label{fig:no-proj}
    \end{subfigure}
    \begin{subfigure}[t]{\textwidth}
    \centering
    \includegraphics[width=\textwidth]{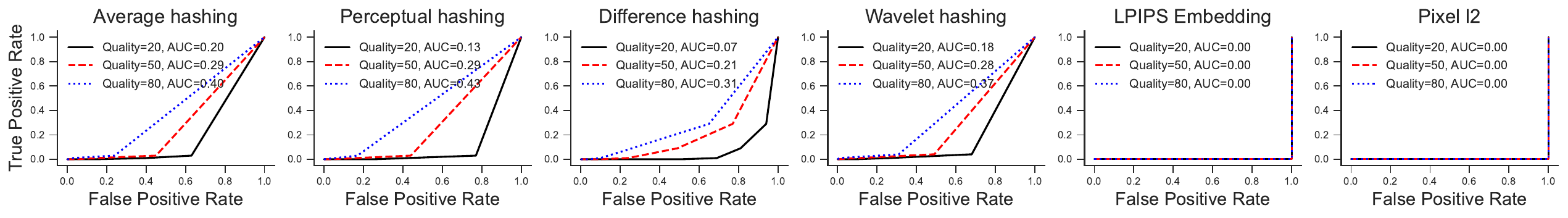}
    \caption{{\small Stealthy attack (w/ projection norm = $10$), $\DAccRetain: 84.1\%$}}
    \label{fig:proj}
    \end{subfigure}
    \vspace{-3mm}
    \caption{ROC curves for various detection methods under (a) non-stealthy attacks and (b) stealthy attacks, where adversarial changes are projected onto a valid input space constrained by an $\ell_2$ ball. The stealthiness of adversarial forget sets makes them harder to differentiate from benign requests with natural perturbations, resulting in lower AUC.
    }
    \label{fig:detection}
    \vspace{-4mm}
\end{figure} 

\textbf{Stealthy attack can easily escape the detection.} As shown in \Cref{fig:no-proj}, detection methods are fairly effective when adversarial modifications to the original images are unbounded, with AUROC values exceeding $0.7$ in most cases. However, when the attacker increases stealth by limiting the adversarial modifications’ $\ell_2$ norm (see \Cref{fig:proj}), although this slightly reduces the attack's effectiveness, it allows the attack to evade most detection mechanisms, often causing the AUROC to drop well below $0.5$, meaning the adversarial forget sets become more similar to the original forget set than naturally perturbed sets (i.e., those generated by image compression). Interestingly, while LPIPS is highly effective when stealth is not enabled, it becomes the weakest method once stealthy tactics are applied.

\subsection{Defenses with Full Access to Training Examples}
\label{subsec:defense-training-data}

When deployers have access to the complete set of training examples, their defensive capabilities are significantly enhanced. We consider two approaches:

\textbf{Distance checking:} Similar to the detection methods discussed in \Cref{subsec:defense-no-training-data}, the deployer can directly compute pixel-wise $\ell_2$ distances between incoming unlearning requests and stored training images. A request is considered invalid if its pixel distance from any stored image is above a threshold $\tau$. However, as shown in \Cref{fig:detection}, stealthy attacks can still evade this pixel-based checker.

\textbf{Similarity searching and indexing:} To address the limitations of simple distance checking in capturing adversarial requests, we propose an alternative unlearning protocol: Instead of directly accepting incoming images, the deployer performs a similarity search against stored training images and retrieves the closest matches for use in the unlearning process. This protocol prevents adversarial examples from directly entering the unlearning pipeline.

\begin{wrapfigure}{r}{6.5cm}
\vspace{-4mm}
    \centering
    \includegraphics[width=0.96\linewidth]{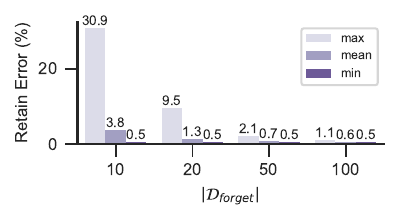}
    \vspace{-4mm}
    \caption{The attacker can optimize the selection of examples in a valid forget set to degrade performance. We report the maximum, mean, and minimum retain error on CIFAR-10 for varying $|\Du|$, with 3000 random selections per size.
    }
    \label{fig:adv_set}
\end{wrapfigure}
However, this protocol still has a potential vulnerability: attackers may optimize the selection of examples in the forget set to maximize negative impact on model performance. As shown in \Cref{fig:adv_set}, such attacks remain feasible on \GA when the forget set is small. For instance, with a forget set containing only ten examples, an adversarially selected set could result in a retain error of $30.9\%$, while a normal forget set gives a retain error of $3.8\%$. In other words, even when the assumption $\Du \subset \Dt$ holds, attackers could still exploit other avenues to compromise the unlearning process. 
We also investigate this vulnerability for exact unlearning in \Cref{app:index_exact} and find it becomes more sensitive when the forget set is large.
\section{Discussions}
\label{sec:discussion}

Finally, we discuss how relaxing the threat model described in \Cref{subsec:threat_model} could affect the attack's performance. In general, our findings below indicate that the attack can succeed even under weaker threat models. This versatility makes developing an effective defense even more challenging.

\subsection{Successful attack even without knowing the exact unlearning algorithm} 

\begin{wraptable}{r}{6.6cm}
    \vspace{-4mm}
    \setlength{\tabcolsep}{11pt}
    \renewcommand{\arraystretch}{0.82}
    \centering
    \small
    \caption{Transferability of adversarial forget sets across unlearning algorithms used to generate them (rows) and the actual unlearning algorithms (columns). Numbers in the table are the maximum accuracy drop ($\%$) with $|\Du|=20$ under white-box attacks on $\Dr$. Diagonal entries represent cases where the attacker knows the exact algorithm.}
    \label{tab:no-algorithm}
    \begin{tabular}{l|ccc}
    \toprule
         & {\bf \GA} & {\bf \GD} & {\bf \GKL} \\
    \midrule
     {\bf \GA}    & $95.12$ & $93.52$ & $93.75$ \\
     {\bf \GD} & $94.20$ & $93.55$ & $94.01$ \\
     {\bf \GKL} & $93.52$ & $91.30$ & $93.07$\\
    \bottomrule
    \end{tabular}
    \vspace{-4mm}
\end{wraptable}
In \Cref{subsec:threat_model}, we assume the attacker has the knowledge of the deployed unlearning algorithm. Here, we consider a more relaxed scenario where the attacker is unaware of the specific algorithm in use. Specifically, the attacker designs a malicious unlearning request for algorithm A (one of \GA or \GD or \GKL), while the actual system implements algorithm B (also one of \GA or \GD or \GKL). As shown in \Cref{tab:no-algorithm}, even without knowledge of the exact unlearning algorithm, the attack still results in an accuracy drop of approximately $90\%$ on the retain set.

\subsection{Successful attack even without access to training dataset} 
\label{sec:discussion-no-train-access}

Another assumption is that the attacker is a data owner within the system and has access to a collection of examples they previously submitted, which were used in training. We now explore the feasibility of the attack when the attacker is not part of the training process and, therefore, does not have access to any portion of the training dataset. 

\begin{wraptable}{r}{5.5cm}
    \setlength{\tabcolsep}{4.5pt}
    \renewcommand{\arraystretch}{0.6}
    \centering
    \small
    \caption{Accuracy drop ($\%$) on $\Dr$ for CIFAR-10, with adversarial forget sets initialized from non-training examples.}
    \label{tab:no-member}
    \vspace{-3mm}
    \begin{tabular}{l|ccc}
    \toprule
       $\mathbf{|\Du|}$  & {\bf Random pixels} & {\bf CIFAR-100} \\
    \midrule
     $10$   & $67.14$ & $30.69$\\
     $20$   & $65.31$ & $26.63$\\
     $50$   & $44.16$ & $14.46$\\
     $100$   & $20.89$ & $8.63$\\
    \bottomrule
    \end{tabular}
    \vspace{-8mm}
\end{wraptable}
We randomly select images from CIFAR-100~\citep{krizhevsky2009learning} or even use randomly initialized pixels to launch the black-box attack. As shown in \Cref{tab:no-member}, even when the attacker initializes their adversarial forget sets with examples not present in the training set, the attack remains effective. However, we note that detection mechanisms discussed in \Cref{sec:defense} would likely catch these attempts.

\vspace{-6mm}
\section{Related Work}
\vspace{-3mm}

\textbf{Machine unlearning.}
Machine unlearning is an emerging research area motivated by privacy regulations like GDPR~\citep{GDPR2016a} and their ethical considerations, such as the right to be forgotten. It focuses on enabling models to ``forget'' specific training data, and has been studied in image classification, language models, and federated learning~\citep{cao2015towards,garg2020formalizing, liu2020federated,meng2022locating,huang2022dataset,che2023fast,cohen2023control,wu2023depn}. Exact unlearning guarantees that the unlearned model is indistinguishable from the model retrained without the forget set. This is generally only feasible for simple models like SVMs and naive Bayes classifiers~\citep{cauwenberghs2000incremental,tveit2003incremental,romero2007incremental,karasuyama2010multiple}. For more complex models such as neural networks, efficient exact unlearning remains an open question. Approximate unlearning offers a more practical solution here, aiming to erase the influence of the forget set without formal guarantees. For example, a widely used family of unlearning algorithms problems performs gradient ascent on the training loss computed on the forget data~\citep{guo2020certified, izzo2021approximate}.

\textbf{Side effects of machine unlearning.} Recent studies have raised concerns about various side effects of machine unlearning. \citet{carlini2022privacy} introduce the privacy onion effect, where removing the most vulnerable data points exposes deeper layers of previously safe data to attacks. 
 \citet{di2022hidden} show that a camouflaged poisoning attack can be enabled via carefully designed unlearning requests. \citet{hayes2024inexact} highlight how inexact unlearning often overestimate privacy protection. \citet{shi2024muse} show that unlearning methods might still exhibit partial memorization of unlearned data, providing a false sense of security regarding privacy guarantees.
\citet{shumailov2024ununlearning} introduces ``ununlearning'' and shows how previously unlearned knowledge can be reintroduced through in-context learning. \citet{lucki2024adversarial} demonstrate that recent unlearning methods proposed for model safety only superficially obscure harmful knowledge rather than fully removing it. Our work focus on a different potential vulnerability in the current machine unlearning protocol, showing that adversarially perturbed forget sets can severely degrade model performance. Furthermore, we show the challenge of efficiently and robustly verifying the authenticity of unlearning requests.

\textbf{Adversarial examples.} 
\citet{szegedy2013intriguing} show that neural networks could be easily misled by slightly perturbing the input images, causing models to make incorrect predictions with high confidence. This phenomenon has been extensively studied from both the attack and defense sides~\citep[e.g.,][]{goodfellow2014explaining,moosavi2017universal,madry2017towards, kurakin2018adversarial}. Our attack also adds adversarial noise to images but with the more challenging goal of fooling a future model produced by unlearning the adversarially perturbed examples.

\vspace{-3mm}
\section{Conclusion}
\vspace{-3mm}

In this work, we reveal a critical vulnerability in machine unlearning systems, where adversaries can significantly degrade model accuracy by submitting unlearning requests for data not in the training set. Our case study on \GA and its variants, a family of widely used unlearning methods, shows that after the attack, test accuracy drops to $3.6\%$ on CIFAR-10 and $0.4\%$ on ImageNet (white-box), and $8.5\%$ on CIFAR-10 and $1.3\%$ on ImageNet (black-box).  Moreover, even advanced verification mechanisms fail to detect these attacks without hindering legitimate requests, highlighting the need for stronger verification methods to secure unlearning systems.

We also recognize several limitations in our work and suggest directions for future research. First, we focus on image classification, and future research could explore the attack's transferability to unlearning for language models~\citep{eldan2023s,zhang2024negative,Maini2024TOFUAT,shi2024muse}. Second, our work targets \GA-based unlearning, and future research could examine the applicability of our findings to other approaches, including non-differentiable methods such as task vectors~\citep{ilharco2023editing}. Lastly, while we have explored various defense mechanisms, there may be additional strategies that warrant investigation for their potential robustness.

\section*{Acknowledgments}
We thank Matthew Jagielski, Eleni Triantafillou, Karolina Dziugaite, Ken Liu, Andreas Terzis, Weijia Shi, and Nicholas Carlini for their valuable feedback and discussions. 

Part of this work was completed while Yangsibo Huang was a PhD student at Princeton, and she acknowledges the support of the Wallace Memorial Fellowship and a Princeton SEAS Innovation Grant.

\section*{Ethics Statement}
Our work examines a critical vulnerability in machine unlearning systems, specifically how adversarial unlearning requests can potentially degrade model accuracy. 
A potential concern would be that an adversarial attacker could deploy our algorithm to attack an existing system with machine unlearning capabilities. However, we believe the risk at the current moment is low because while machine unlearning is being actively researched on as a promising paradigm, it has not yet been widely deployed. For future deployments, our work is actually important for improving the security.
Specifically, by exploring these security risks, we aim to 1) disclose and highlight potential threats, and 2) advance understanding and call for improvements such as stronger verification mechanisms for making unlearning protocols safer and more robust. 

All our experiments, including white-box and black-box attacks, were conducted strictly for academic purposes in order to better understand the security of machine unlearning methods. We have not engaged in any malicious activities, and the datasets used (CIFAR-10 and ImageNet) are publicly available, and widely used for research purposes.

\newpage
\bibliography{ref}

\begin{thebibliography}{59}
\providecommand{\natexlab}[1]{#1}
\providecommand{\url}[1]{\texttt{#1}}
\expandafter\ifx\csname urlstyle\endcsname\relax
  \providecommand{\doi}[1]{doi: #1}\else
  \providecommand{\doi}{doi: \begingroup \urlstyle{rm}\Url}\fi

\bibitem[Bourtoule et~al.(2021)Bourtoule, Chandrasekaran, Choquette-Choo, Jia, Travers, Zhang, Lie, and Papernot]{bourtoule2021machine}
Lucas Bourtoule, Varun Chandrasekaran, Christopher~A Choquette-Choo, Hengrui Jia, Adelin Travers, Baiwu Zhang, David Lie, and Nicolas Papernot.
\newblock Machine unlearning.
\newblock In \emph{S \& P}, 2021.

\bibitem[Cao \& Yang(2015)Cao and Yang]{cao2015towards}
Yinzhi Cao and Junfeng Yang.
\newblock Towards making systems forget with machine unlearning.
\newblock In \emph{S \& P}, 2015.

\bibitem[Carlini et~al.(2021)Carlini, Tramer, Wallace, Jagielski, Herbert-Voss, Lee, Roberts, Brown, Song, Erlingsson, et~al.]{carlini2021extracting}
Nicholas Carlini, Florian Tramer, Eric Wallace, Matthew Jagielski, Ariel Herbert-Voss, Katherine Lee, Adam Roberts, Tom Brown, Dawn Song, Ulfar Erlingsson, et~al.
\newblock Extracting training data from large language models.
\newblock In \emph{USENIX Security}, 2021.

\bibitem[Carlini et~al.(2022)Carlini, Jagielski, Zhang, Papernot, Terzis, and Tramer]{carlini2022privacy}
Nicholas Carlini, Matthew Jagielski, Chiyuan Zhang, Nicolas Papernot, Andreas Terzis, and Florian Tramer.
\newblock The privacy onion effect: Memorization is relative.
\newblock \emph{NeurIPS}, 2022.

\bibitem[Cauwenberghs \& Poggio(2000)Cauwenberghs and Poggio]{cauwenberghs2000incremental}
Gert Cauwenberghs and Tomaso Poggio.
\newblock Incremental and decremental support vector machine learning.
\newblock In \emph{NeurIPS}, 2000.

\bibitem[Che et~al.(2023)Che, Zhou, Zhang, Lyu, Liu, Yan, Dou, and Huan]{che2023fast}
Tianshi Che, Yang Zhou, Zijie Zhang, Lingjuan Lyu, Ji~Liu, Da~Yan, Dejing Dou, and Jun Huan.
\newblock Fast federated machine unlearning with nonlinear functional theory.
\newblock In \emph{ICML}, 2023.

\bibitem[Chen et~al.(2017)Chen, Liu, Li, Lu, and Song]{chen2017targeted}
Xinyun Chen, Chang Liu, Bo~Li, Kimberly Lu, and Dawn Song.
\newblock Targeted backdoor attacks on deep learning systems using data poisoning.
\newblock \emph{arXiv preprint arXiv:1712.05526}, 2017.

\bibitem[Chundawat et~al.(2023)Chundawat, Tarun, Mandal, and Kankanhalli]{chundawat2023zero}
Vikram~S Chundawat, Ayush~K Tarun, Murari Mandal, and Mohan Kankanhalli.
\newblock Zero-shot machine unlearning.
\newblock \emph{IEEE Transactions on Information Forensics and Security}, 2023.

\bibitem[Cohen et~al.(2023)Cohen, Smith, Swanberg, and Vasudevan]{cohen2023control}
Aloni Cohen, Adam Smith, Marika Swanberg, and Prashant~Nalini Vasudevan.
\newblock Control, confidentiality, and the right to be forgotten.
\newblock In \emph{CCS}, 2023.

\bibitem[Dagr{\'e}ou et~al.(2024)Dagr{\'e}ou, Ablin, Vaiter, and Moreau]{dagreou2024howtocompute}
Mathieu Dagr{\'e}ou, Pierre Ablin, Samuel Vaiter, and Thomas Moreau.
\newblock How to compute {H}essian-vector products?, 2024.
\newblock URL \url{https://iclr-blogposts.github.io/2024/blog/bench-hvp/}.

\bibitem[Deng et~al.(2009)Deng, Dong, Socher, Li, Li, and Fei-Fei]{deng2009imagenet}
Jia Deng, Wei Dong, Richard Socher, Li-Jia Li, Kai Li, and Li~Fei-Fei.
\newblock Imagenet: A large-scale hierarchical image database.
\newblock In \emph{CVPR}, 2009.

\bibitem[Di et~al.(2022)Di, Douglas, Acharya, Kamath, and Sekhari]{di2022hidden}
Jimmy~Z Di, Jack Douglas, Jayadev Acharya, Gautam Kamath, and Ayush Sekhari.
\newblock Hidden poison: Machine unlearning enables camouflaged poisoning attacks.
\newblock In \emph{NeurIPS ML Safety Workshop}, 2022.

\bibitem[Duchi et~al.(2015)Duchi, Jordan, Wainwright, and Wibisono]{duchi2015optimal}
John~C Duchi, Michael~I Jordan, Martin~J Wainwright, and Andre Wibisono.
\newblock Optimal rates for zero-order convex optimization: The power of two function evaluations.
\newblock \emph{TOIT}, 2015.

\bibitem[Eldan \& Russinovich(2023)Eldan and Russinovich]{eldan2023s}
Ronen Eldan and Mark Russinovich.
\newblock {Who's Harry Potter? Approximate Unlearning in LLMs}.
\newblock \emph{arXiv preprint arXiv:2310.02238}, 2023.

\bibitem[{European Parliament} \& {Council of the European Union}(){European Parliament} and {Council of the European Union}]{GDPR2016a}
{European Parliament} and {Council of the European Union}.
\newblock Regulation ({EU}) 2016/679 of the {European} {Parliament} and of the {Council}.
\newblock URL \url{https://data.europa.eu/eli/reg/2016/679/oj}.

\bibitem[Finn et~al.(2017)Finn, Abbeel, and Levine]{finn2017model}
Chelsea Finn, Pieter Abbeel, and Sergey Levine.
\newblock Model-agnostic meta-learning for fast adaptation of deep networks.
\newblock In \emph{ICML}, 2017.

\bibitem[Garg et~al.(2020)Garg, Goldwasser, and Vasudevan]{garg2020formalizing}
Sanjam Garg, Shafi Goldwasser, and Prashant~Nalini Vasudevan.
\newblock Formalizing data deletion in the context of the right to be forgotten.
\newblock In \emph{TCC}, 2020.

\bibitem[Ginart et~al.(2019)Ginart, Guan, Valiant, and Zou]{ginart2019making}
Antonio Ginart, Melody Guan, Gregory Valiant, and James~Y Zou.
\newblock Making {AI} forget you: Data deletion in machine learning.
\newblock In \emph{NeurIPS}, 2019.

\bibitem[Goodfellow et~al.(2015)Goodfellow, Shlens, and Szegedy]{goodfellow2014explaining}
Ian~J Goodfellow, Jonathon Shlens, and Christian Szegedy.
\newblock Explaining and harnessing adversarial examples.
\newblock \emph{ICLR}, 2015.

\bibitem[Graves et~al.(2021)Graves, Nagisetty, and Ganesh]{graves2021amnesiac}
Laura Graves, Vineel Nagisetty, and Vijay Ganesh.
\newblock Amnesiac machine learning.
\newblock In \emph{AAAI}, 2021.

\bibitem[Guo et~al.(2020)Guo, Goldstein, Hannun, and Van Der~Maaten]{guo2020certified}
Chuan Guo, Tom Goldstein, Awni Hannun, and Laurens Van Der~Maaten.
\newblock Certified data removal from machine learning models.
\newblock In \emph{ICML}, 2020.

\bibitem[Gupta et~al.(2021)Gupta, Jung, Neel, Roth, Sharifi-Malvajerdi, and Waites]{gupta2021adaptive}
Varun Gupta, Christopher Jung, Seth Neel, Aaron Roth, Saeed Sharifi-Malvajerdi, and Chris Waites.
\newblock In \emph{NeurIPS}, 2021.

\bibitem[Hayes et~al.(2024)Hayes, Shumailov, Triantafillou, Khalifa, and Papernot]{hayes2024inexact}
Jamie Hayes, Ilia Shumailov, Eleni Triantafillou, Amr Khalifa, and Nicolas Papernot.
\newblock Inexact unlearning needs more careful evaluations to avoid a false sense of privacy.
\newblock \emph{arXiv preprint arXiv:2403.01218}, 2024.

\bibitem[He et~al.(2016)He, Zhang, Ren, and Sun]{he2016deep}
Kaiming He, Xiangyu Zhang, Shaoqing Ren, and Jian Sun.
\newblock Deep residual learning for image recognition.
\newblock In \emph{CVPR}, 2016.

\bibitem[He et~al.(2024)He, Huang, Shi, Xie, Liu, Wang, Zettlemoyer, Zhang, Chen, and Henderson]{he2024fantastic}
Luxi He, Yangsibo Huang, Weijia Shi, Tinghao Xie, Haotian Liu, Yue Wang, Luke Zettlemoyer, Chiyuan Zhang, Danqi Chen, and Peter Henderson.
\newblock Fantastic copyrighted beasts and how (not) to generate them.
\newblock \emph{arXiv preprint arXiv:2406.14526}, 2024.

\bibitem[Henderson et~al.(2023)Henderson, Li, Jurafsky, Hashimoto, Lemley, and Liang]{henderson2023foundation}
Peter Henderson, Xuechen Li, Dan Jurafsky, Tatsunori Hashimoto, Mark~A Lemley, and Percy Liang.
\newblock Foundation models and fair use.
\newblock \emph{JMLR}, 2023.

\bibitem[Huang et~al.(2022)Huang, Huang, Li, and Li]{huang2022dataset}
Yangsibo Huang, Chun-Yin Huang, Xiaoxiao Li, and Kai Li.
\newblock A dataset auditing method for collaboratively trained machine learning models.
\newblock \emph{IEEE TMI}, 2022.

\bibitem[Ilharco et~al.(2023)Ilharco, Ribeiro, Wortsman, Gururangan, Schmidt, Hajishirzi, and Farhadi]{ilharco2023editing}
Gabriel Ilharco, Marco~Tulio Ribeiro, Mitchell Wortsman, Suchin Gururangan, Ludwig Schmidt, Hannaneh Hajishirzi, and Ali Farhadi.
\newblock Editing models with task arithmetic.
\newblock In \emph{ICLR}, 2023.

\bibitem[Izzo et~al.(2021)Izzo, Smart, Chaudhuri, and Zou]{izzo2021approximate}
Zachary Izzo, Mary~Anne Smart, Kamalika Chaudhuri, and James Zou.
\newblock Approximate data deletion from machine learning models.
\newblock In \emph{AISTATS}, 2021.

\bibitem[Karasuyama \& Takeuchi(2010)Karasuyama and Takeuchi]{karasuyama2010multiple}
Masayuki Karasuyama and Ichiro Takeuchi.
\newblock Multiple incremental decremental learning of support vector machines.
\newblock \emph{IEEE Transactions on Neural Networks}, 2010.

\bibitem[Krizhevsky et~al.(2009)]{krizhevsky2009learning}
Alex Krizhevsky et~al.
\newblock Learning multiple layers of features from tiny images.
\newblock 2009.

\bibitem[Kurakin et~al.(2018)Kurakin, Goodfellow, and Bengio]{kurakin2018adversarial}
Alexey Kurakin, Ian~J Goodfellow, and Samy Bengio.
\newblock Adversarial examples in the physical world.
\newblock In \emph{Artificial Intelligence Safety and Security}. 2018.

\bibitem[Lee et~al.(2024)Lee, Cooper, and Grimmelmann]{lee2024talkin}
Katherine Lee, A.~Feder Cooper, and James Grimmelmann.
\newblock Talkin' 'bout {AI} generation: Copyright and the generative-{AI} supply chain, 2024.

\bibitem[Little(2024)]{imagehash2024}
Benjamin Little.
\newblock Imagehash: A perceptual image hashing library.
\newblock \url{https://github.com/bjlittle/imagehash}, 2024.
\newblock Accessed: 2024-09-10.

\bibitem[Liu et~al.(2022)Liu, Liu, and Stone]{liu2022continual}
Bo~Liu, Qiang Liu, and Peter Stone.
\newblock Continual learning and private unlearning.
\newblock In \emph{CoLLAs}, 2022.

\bibitem[Liu et~al.(2020)Liu, Ma, Yang, Wang, and Liu]{liu2020federated}
Gaoyang Liu, Xiaoqiang Ma, Yang Yang, Chen Wang, and Jiangchuan Liu.
\newblock Federated unlearning.
\newblock \emph{arXiv preprint arXiv:2012.13891}, 2020.

\bibitem[{\L}ucki et~al.(2024){\L}ucki, Wei, Huang, Henderson, Tram{\`e}r, and Rando]{lucki2024adversarial}
Jakub {\L}ucki, Boyi Wei, Yangsibo Huang, Peter Henderson, Florian Tram{\`e}r, and Javier Rando.
\newblock An adversarial perspective on machine unlearning for {AI} safety.
\newblock \emph{arXiv preprint arXiv:2409.18025}, 2024.

\bibitem[M{\k{a}}dry et~al.(2018)M{\k{a}}dry, Makelov, Schmidt, Tsipras, and Vladu]{madry2017towards}
Aleksander M{\k{a}}dry, Aleksandar Makelov, Ludwig Schmidt, Dimitris Tsipras, and Adrian Vladu.
\newblock Towards deep learning models resistant to adversarial attacks.
\newblock In \emph{ICLR}, 2018.

\bibitem[Maini et~al.(2024)Maini, Feng, Schwarzschild, Lipton, and Kolter]{Maini2024TOFUAT}
Pratyush Maini, Zhili Feng, Avi Schwarzschild, Zachary~Chase Lipton, and J.~Zico Kolter.
\newblock {TOFU:} a task of fictitious unlearning for {LLMs}.
\newblock In \emph{COLM}, 2024.

\bibitem[Meng et~al.(2022)Meng, Bau, Andonian, and Belinkov]{meng2022locating}
Kevin Meng, David Bau, Alex Andonian, and Yonatan Belinkov.
\newblock Locating and editing factual associations in {GPT}.
\newblock In \emph{NeurIPS}, 2022.

\bibitem[Moosavi-Dezfooli et~al.(2017)Moosavi-Dezfooli, Fawzi, Fawzi, and Frossard]{moosavi2017universal}
Seyed-Mohsen Moosavi-Dezfooli, Alhussein Fawzi, Omar Fawzi, and Pascal Frossard.
\newblock Universal adversarial perturbations.
\newblock In \emph{CVPR}, 2017.

\bibitem[Nesterov \& Spokoiny(2017)Nesterov and Spokoiny]{nesterov2017random}
Yurii Nesterov and Vladimir Spokoiny.
\newblock Random gradient-free minimization of convex functions.
\newblock \emph{Foundations of Computational Mathematics}, 2017.

\bibitem[Romero et~al.(2007)Romero, Barrio, and Belanche]{romero2007incremental}
Enrique Romero, Ignacio Barrio, and Llu{\'\i}s Belanche.
\newblock Incremental and decremental learning for linear support vector machines.
\newblock In \emph{ICNN}, 2007.

\bibitem[Rosenblatt(1958)]{rosenblatt1958perceptron}
Frank Rosenblatt.
\newblock The perceptron: a probabilistic model for information storage and organization in the brain.
\newblock \emph{Psychological review}, 65\penalty0 (6):\penalty0 386, 1958.

\bibitem[Sekhari et~al.(2021)Sekhari, Acharya, Kamath, and Suresh]{sekhari2021remember}
Ayush Sekhari, Jayadev Acharya, Gautam Kamath, and Ananda~Theertha Suresh.
\newblock Remember what you want to forget: Algorithms for machine unlearning.
\newblock In \emph{NeurIPS}, 2021.

\bibitem[Shi et~al.(2024)Shi, Lee, Huang, Malladi, Zhao, Holtzman, Liu, Zettlemoyer, Smith, and Zhang]{shi2024muse}
Weijia Shi, Jaechan Lee, Yangsibo Huang, Sadhika Malladi, Jieyu Zhao, Ari Holtzman, Daogao Liu, Luke Zettlemoyer, Noah~A Smith, and Chiyuan Zhang.
\newblock Muse: Machine unlearning six-way evaluation for language models.
\newblock \emph{arXiv preprint arXiv:2407.06460}, 2024.

\bibitem[Shumailov et~al.(2024)Shumailov, Hayes, Triantafillou, Ortiz-Jimenez, Papernot, Jagielski, Yona, Howard, and Bagdasaryan]{shumailov2024ununlearning}
Ilia Shumailov, Jamie Hayes, Eleni Triantafillou, Guillermo Ortiz-Jimenez, Nicolas Papernot, Matthew Jagielski, Itay Yona, Heidi Howard, and Eugene Bagdasaryan.
\newblock Ununlearning: Unlearning is not sufficient for content regulation in advanced generative {AI}.
\newblock \emph{arXiv preprint arXiv:2407.00106}, 2024.

\bibitem[Szegedy(2013)]{szegedy2013intriguing}
C~Szegedy.
\newblock Intriguing properties of neural networks.
\newblock \emph{arXiv preprint arXiv:1312.6199}, 2013.

\bibitem[Tarun et~al.(2023)Tarun, Chundawat, Mandal, and Kankanhalli]{tarun2023fast}
Ayush~K Tarun, Vikram~S Chundawat, Murari Mandal, and Mohan Kankanhalli.
\newblock Fast yet effective machine unlearning.
\newblock \emph{IEEE Transactions on Neural Networks and Learning Systems}, 2023.

\bibitem[Triantafillou et~al.(2023)Triantafillou, Pedregosa, Hayes, Kairouz, Guyon, Kurmanji, Dziugaite, Triantafillou, Zhao, Sun~Hosoya, Jacques~Junior, Dumoulin, Mitliagkas, Escalera, Wan, Dane, Demkin, and Reade]{neurips2023machineunlearning}
Eleni Triantafillou, Fabian Pedregosa, Jamie Hayes, Peter Kairouz, Isabelle Guyon, Meghdad Kurmanji, Gintare~Karolina Dziugaite, Peter Triantafillou, Kairan Zhao, Lisheng Sun~Hosoya, Julio C.~S. Jacques~Junior, Vincent Dumoulin, Ioannis Mitliagkas, Sergio Escalera, Jun Wan, Sohier Dane, Maggie Demkin, and Walter Reade.
\newblock Neurips 2023 - machine unlearning.
\newblock \url{https://kaggle.com/competitions/neurips-2023-machine-unlearning}, 2023.

\bibitem[Tveit et~al.(2003)Tveit, Hetland, and Engum]{tveit2003incremental}
Amund Tveit, Magnus~Lie Hetland, and H{\aa}avard Engum.
\newblock Incremental and decremental proximal support vector classification using decay coefficients.
\newblock In \emph{DaWaK}, 2003.

\bibitem[Vershynin(2018)]{vershynin2018hdp}
Roman Vershynin.
\newblock \emph{High-Dimensional Probability: An Introduction with Applications in Data Science}.
\newblock Cambridge Series in Statistical and Probabilistic Mathematics. Cambridge University Press, 2018.

\bibitem[Wei et~al.(2024)Wei, Shi, Huang, Smith, Zhang, Zettlemoyer, Li, and Henderson]{wei2024evaluating}
Boyi Wei, Weijia Shi, Yangsibo Huang, Noah~A Smith, Chiyuan Zhang, Luke Zettlemoyer, Kai Li, and Peter Henderson.
\newblock Evaluating copyright takedown methods for language models.
\newblock In \emph{NeurIPS}, 2024.

\bibitem[Wu et~al.(2023)Wu, Li, Xu, Dong, Wu, Bian, and Xiong]{wu2023depn}
Xinwei Wu, Junzhuo Li, Minghui Xu, Weilong Dong, Shuangzhi Wu, Chao Bian, and Deyi Xiong.
\newblock Depn: Detecting and editing privacy neurons in pretrained language models.
\newblock \emph{EMNLP}, 2023.

\bibitem[Wu et~al.(2020)Wu, Dobriban, and Davidson]{wu2020deltagrad}
Yinjun Wu, Edgar Dobriban, and Susan Davidson.
\newblock Deltagrad: Rapid retraining of machine learning models.
\newblock In \emph{ICML}, 2020.

\bibitem[Xie et~al.(2017)Xie, Girshick, Doll{\'a}r, Tu, and He]{xie2017aggregated}
Saining Xie, Ross Girshick, Piotr Doll{\'a}r, Zhuowen Tu, and Kaiming He.
\newblock Aggregated residual transformations for deep neural networks.
\newblock In \emph{CVPR}, 2017.

\bibitem[Ye et~al.(2022)Ye, Fu, Song, Yang, Liu, Jin, Song, and Wang]{ye2022learning}
Jingwen Ye, Yifang Fu, Jie Song, Xingyi Yang, Songhua Liu, Xin Jin, Mingli Song, and Xinchao Wang.
\newblock Learning with recoverable forgetting.
\newblock In \emph{ECCV}, 2022.

\bibitem[Zhang et~al.(2018)Zhang, Isola, Efros, Shechtman, and Wang]{zhang2018unreasonable}
Richard Zhang, Phillip Isola, Alexei~A Efros, Eli Shechtman, and Oliver Wang.
\newblock The unreasonable effectiveness of deep features as a perceptual metric.
\newblock In \emph{CVPR}, 2018.

\bibitem[Zhang et~al.(2024)Zhang, Lin, Bai, and Mei]{zhang2024negative}
Ruiqi Zhang, Licong Lin, Yu~Bai, and Song Mei.
\newblock Negative preference optimization: From catastrophic collapse to effective unlearning.
\newblock In \emph{COLM}, 2024.

\end{thebibliography}
\bibliographystyle{iclr2025_conference}

\newpage
\appendix
\appendixpage
\startcontents[sections]
\printcontents[sections]{l}{1}{\setcounter{tocdepth}{2}}

\newpage
\section{Experimental details}
\label{app:details}

We provide more experimental details as follow.

\subsection{Compute Configuration}
\label{app:compute}
We conduct all the experiments on NVIDIA A100-64GB GPU cards with 4 CPUs. 
The typical average runtime for different experiments are listed in \Cref{tab:run_time} with five random drawn forget datasets. 
For white-box attack, we report the average runtime with $\advT=1000$. For black-box attack, we report the runtime with $\advT=1000, m=1, p=1, d=5$. 

\begin{table}[ht]
    \centering
    \small
    \setlength{\tabcolsep}{3pt}
    \caption{Average runtime (in GPU seconds) for the attack across five independent runs, reported under different forget set sizes. The standard deviations are shown in parentheses.}
    \begin{subtable}[b]{0.48\textwidth}
    \centering
    \begin{tabular}{c|cc}
    \toprule
      $| \Du |$   & {\bf White-box Attack} & {\bf Black-box attack}  \\
    \midrule
       10  & 71.72  & 552.98  \\
       20  & 71.30 & 544.54 \\
       50  & 71.41 & 548.54  \\
       100  &  72.20 & 549.60 \\
    \bottomrule
    \end{tabular}
    \caption{CIFAR-10}
    \end{subtable}
    \hfill
    \begin{subtable}[b]{0.48\textwidth}
    \centering
    \begin{tabular}{c|cc}
    \toprule
      $| \Du |$   & {\bf White-box attack} & {\bf Black-box attack}  \\
    \midrule
       10  & 808.27 & 944.86  \\
       20  & 794.51 & 1052.99 \\
       50  &  1095.13 & 1497.13 \\
       100  & 1758.28 & 2200.38  \\  
    \bottomrule
    \end{tabular}
    \caption{ImageNet}
    \end{subtable}
    \label{tab:run_time}
\end{table}

\subsection{Hyperparameters}
\label{app:hparam}

\textbf{Hyperparameters for unlearning.} The hyperparameters for unlearning are selected to ensure that the unlearning process on benign $\Du$ is effective, resulting in a roughly $10\%$ to $20\%$ drop in accuracy, closely aligning with the model’s performance on non-training examples from $\Du$. At the same time, it induces only a minimal accuracy reduction on $\Dr$.

\textbf{Hyperparameters for our attacks.} We also provide the hyperparameter settings for white-box and black-box attacks used in our experiments in \Cref{tab:hparam}.

\begin{table}[h]
    \setlength{\tabcolsep}{2pt}
    \centering
    \caption{Hyperparameter settings for white-box and black-box attacks used in our experiments.}
    \label{tab:hparam}
    \begin{tabular}{c|ccc}
    \toprule
   &   {\bf CIFAR-10} & {\bf ImageNet} \\ 
   \midrule
   \multicolumn{3}{c}{White-box attack}\\
   \midrule
   Forget set size ($|\Du|$) & $\{10, 20, 50, 100\}$ & $\{10, 20, 50, 100\}$ \\
   Adversarial step size ($\advlr$)  &\multicolumn{2}{c}{$\{0.01, 0.02, 0.05, 0.1, 0.2, 0.5, 1.0, 2.0, 5.0\}$} &   \\
   Adversarial optimization steps ($\advT$) & \multicolumn{2}{c}{$\{10, 20, 50, 100, 200, 500, 1000, 2000, 5000\}$} \\
    \midrule
    \multicolumn{3}{c}{Black-box attack}\\
    \midrule
    Forget set size ($|\Du|$) & $\{10, 20, 50, 100\}$ & $\{10, 20, 50, 100\}$ \\
    Adversarial step size ($\advlr$) & \multicolumn{2}{c}{$\{0.01, 0.02, 0.05, 0.1, 0.2, 0.5, 1.0, 2.0, 5.0\}$} &   \\
    Adversarial optimization steps ($\advT$) & \multicolumn{2}{c}{$\{10, 20, 50, 100, 200, 500, 1000, 2000, 5000\}$} \\
    \midrule
    \# of gradient estimators for each noise candidate ($p$) & $\{1, 2, 3, 4, 5\}$ & $\{1, 2, 3\}$\\ 
 Candidate noise set size   ($m$) & $\{1, 2, 3\}$ & $\{1, 2, 3\}$\\ 
    \bottomrule
    \end{tabular}
\end{table}

\subsection{Black-box Attack Algorithm for ImageNet}
\label{app:black_imagenet}

We make slight adjustments to the black-box algorithm for ImageNet, which we describe in \Cref{alg:black_box_imagenet}. The main differences between \Cref{alg:black_box} and \Cref{alg:black_box_imagenet} are as follows: 
\begin{itemize}
    \item {\bf Stopping condition:} in \Cref{alg:black_box}, the update process for noise stops, and the loop proceeds to the next iteration if the conditions$g(\Xforget+\Noise+\Delta) \le g(\Xforget+\Noise)$ and $g(\Xforget+\Noise-\Delta) \le g(\Xforget+\Noise)$ are both satisfied.
    In contrast, we found this condition frequently occurs for ImageNet (\Cref{alg:black_box_imagenet}), so we allow the algorithm to continue without stopping, making the process more robust to such scenarios.
    \item {\bf Gradient estimation parameter:} In \Cref{alg:black_box_imagenet}, we introduce an additional hyperparameter $d$ to control the number of gradient estimators. 
    Specifically,  $d$ gradient estimators are drawn and averaged to obtain the final gradient. This adjustment is made because white-box attacks tend to perform well on ImageNet, and averaging over a larger $d$ improves the quality of the gradient estimate. 
\end{itemize}

\begin{algorithm}[H]
    \begin{algorithmic}[1]
    \caption{Black-box attack for ImageNet}
    \label{alg:black_box_imagenet}
    
    \State {\bf Input:} original model $\ft$, a collection of training examples $X \subset \Dt$, retain set $\Dr$, unlearning method $\unlearn$, adversarial step size $\advlr$, training steps $\advT$, access to loss function $g(\advX) := \mathcal{L}_{\mathrm{retain}}(\unlearn(\ft, \{\advX, y_{\text{forget}}\}, \Dr))$, hyperparameters $p,m,d$.
    
    \State Initialize $\advX \gets X$
    \State Initialize noise candidate set $\NoiseCandidates$ of size $m$
    
    \For{$t=1,\cdots,\advT$}
        \For{$\Noise \in \NoiseCandidates$}
            \State $\Noise' \gets \text{EstimateGradientsImagenet}(\Noise, p, d, g)$ \Comment{Call to the new ImageNet gradient estimation procedure}
            \State Append $\Noise'$ to $\NoiseCandidates$
        \EndFor
        \State Keep the top $m$ noises in $\NoiseCandidates$ (based on loss function $g$)  
    \EndFor
    
    \State Choose the best $\Noise$ in $\NoiseCandidates$
    \State {\bf Return:} $(\Xforget+\Noise, y_{\text{forget}})$
    \Statex
    \Procedure{EstimateGradientsImagenet}{$\Noise$, $p$, $d$, $g$} \Comment{Estimate the gradient and update noise for ImageNet}
        \For{$i=1,\cdots,p$} \Comment{Repeat $p$ times for noise candidates}
            \For{$j=1,\cdots,d$} \Comment{draw $d$ samples and compute the average for higher accuracy}
                \State Draw random unit noise $\Delta$
                \State Compute $g(\Xforget + \Noise + \Delta)$ and $g(\Xforget + \Noise - \Delta)$
                \State Estimate gradient $\nabla g_j(\Noise) \gets \frac{g(\Xforget + \Noise + \Delta) - g(\Xforget + \Noise - \Delta)}{2} \Delta$
            \EndFor
            \State Compute the average gradient estimator $\nabla g(\Noise) = \frac{1}{d} \sum_{j=1}^{d} \nabla g_j(\Noise)$
            \State Update noise: $\Noise' \gets \Noise + \advlr \nabla g(\Noise)$
        \EndFor
        \State {\bf Return:} updated noise $\Noise'$
    \EndProcedure

\end{algorithmic}
\end{algorithm}
\newpage
\section{More experimental  results}
\label{app:results}

\subsection{Benign Unlearning Results}

\Cref{tab:unlearn_acc} reports the accuracy for the forget, retain, and test sets under benign unlearning requests of various sizes. As shown, the accuracy on the forget set is roughly comparable to the test set after unlearning, while the accuracy on the retain set remains high.

\begin{table}[ht]
    \centering
    \caption{Accuracy on forget, retain and test sets under benign unlearning request.}
    \centering
    \begin{tabular}{c|ccc}
    \toprule
      $| \Du |$   &  $\fu$, Forget accuracy & $\fu$, Retain accuracy& $\fu$, Test accuracy  \\
    \midrule
       $10$  & $83.15 \pm 14.82$ & $97.93 \pm 4.78$ & $86.88 \pm 3.92$\\
       $20$  & $89.01 \pm 7.76$ & $99.09 \pm 1.04$ & $87.89 \pm 0.95$\\
       $50$  & $86.18 \pm 3.74$ & $99.21 \pm 2.85$& $88.03 \pm 0.24$\\
       $100$ & $86.72 \pm 2.05$ & $99.36 \pm 6.18$ & $88.29 \pm 0.17$\\
    \bottomrule
    \end{tabular}
    \label{tab:unlearn_acc}
\end{table}

\subsection{Effect of Attack Hyperparameters}
\label{app:res_attack_hparam}

We further examine the performance of the white-box attack under different hyperparameter configurations. Results shown in \Cref{fig:hparam} reveal a clear trend: lower attack step sizes ($\advlr$) are more effective for attacks targeting smaller forget sets, while higher attack step sizes and increased optimization steps ($\advT$) improve attacks on larger forget sets.

\begin{figure}[ht]
    \centering
    \includegraphics[width=\linewidth]{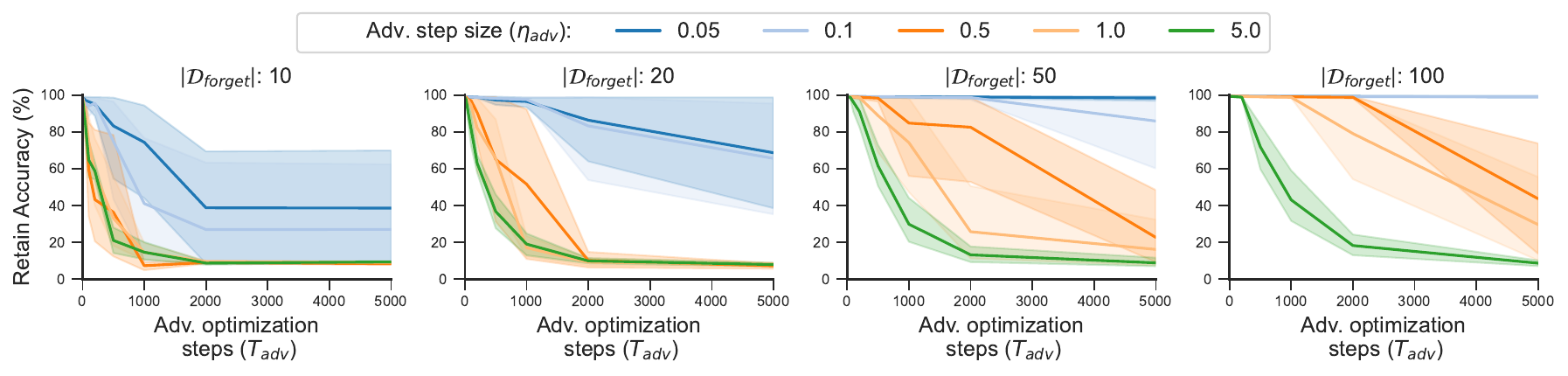}
    \caption{Retain accuracy after white-box attack on CIFAR-10 for various combinations of $\advT$ and $\advlr$. Attacking a larger $\Du$ typically requires a higher $\advlr$ and larger $\advT$.
    }
    \label{fig:hparam}
\end{figure}

\subsection{Indexing Attack for Exact Unlearning}
\label{app:index_exact}

Earlier in \Cref{subsec:defense-training-data}, we discuss another potential protocol for unlearning: instead of directly accepting incoming images, the deployer performs a similarity search against stored training images and retrieves the closest matches for use in the unlearning process. However, we show that this protocol is still vulnerable in a sense that attackers may optimize the selection of examples in the forget set to maximize negative impact on model performance. For \GA, the attack is most effective when $\Du$ is small.

We also investigate this vulnerability for exact unlearning in \Cref{fig:index_exact} and find that its vulnerability slightly increases with a larger forget set. This is because, with a larger forget set, exact unlearning excludes more data points, making it slightly easier for certain selections of the forget set to cause a more significant accuracy drop.

\begin{figure}[ht]
    \centering
    \begin{subfigure}[b]{0.48\textwidth}
    \includegraphics[width=0.95\linewidth]{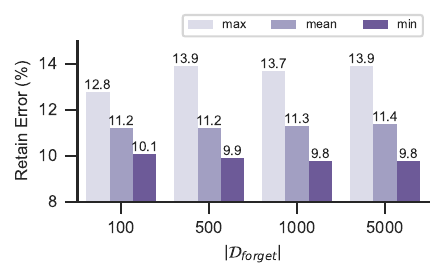}
    \caption{Retain error}
    \end{subfigure}
    \begin{subfigure}[b]{0.48\textwidth}
    \includegraphics[width=0.95\linewidth]{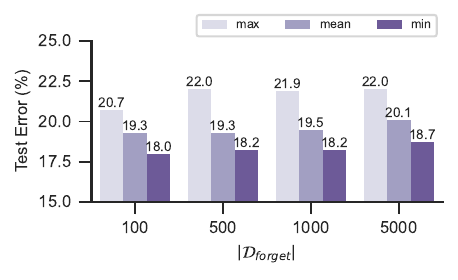}
    \caption{Test error}
    \end{subfigure}
    \caption{The attacker can optimize the selection of examples in a valid forget set to degrade performance. We report the maximum, mean, and minimum retain error on CIFAR-10 with exact unlearning for varying $|\Du|$, with 500 random selections per size.}
    \label{fig:index_exact}
\end{figure}

{
\newcommand{\loss}{\mathcal{L}}
\newcommand{\indicator}{\mathds{1}}

\newcommand{\ang}[1]{\langle #1 \rangle}

\newpage
\section{Theoretical Demonstration of Attack}
\label{app:theory}

In this section, we discuss a simple stylized setting of a learning algorithm, an unlearning algorithm (that achieves perfect unlearning), and formally show that there exist adversarial perturbations of the forget set that can make the unlearning algorithm yield a predictor that performs poorly on the  retain set. 

\subsection{The Existence Results}

Our goal is to construct a simple setting where it is easy to conceptually understand why such an attack can exist. We use an unlearning algorithm similar to the \GA-family algorithms studied in our experiments, but we acknowledge a number of assumptions such as Gaussian inputs and linear models do not fully match the settings of our empirical results. Nonetheless, the formal results provide theoretical intuition on why such attacks are possible.

We discuss possible future directions to study more such constructions in \Cref{subsec:theory-discussion}.

For data $(x,y)$ drawn i.i.d. from a distribution $P$, a learning algorithm, abstractly speaking, is a (randomized) method $\learn : (\gX \times \gY)^* \to \gH$ that maps sequences $\Dt = ((x_1, y_1), \ldots, (x_n, y_n))$ of examples to a predictor $h \in \gH$. We measure the quality of the learning algorithm in terms of its {\em empirical} and {\em population} loss, namely for $\ell : \gH \times (\gX \times \gY) \to \{0, 1\}$, we would like to minimize the population loss $\loss(h; P) := \mathbb{E}_{(x, y) \sim P} \ell(h; x, y)$ or even the empirical loss $\loss(h; \Dt) := \frac{1}{|\Dt|} \sum_{(x,y) \in \Dt} \ell(h; x, y)$.

An unlearning algorithm is a (randomized) method $\unlearn : \gH \times (\gX \times \gY)^* \times (\gX \times \gY)^* \to \gH$ that maps the current predictor $h \in \gH$, a {\em forget} set $\Du \in (\gX \times \gY)^*$ and a {\em retain} set $\Dr \in (\gX \times \gY)^*$ to an updated predictor $\tilde{h} \in \gH$. A desirable property of the unlearning algorithm $\unlearn$ is that when invoked on $h$ being the output of the learning algorithm on input $\Dt = \Du \circ \Dr$, $\unlearn(\learn(\Du \circ \Dr), \Du, \Dr)$ has the same distribution as, or at least is ``close to'', the distribution  $\learn(\Dr)$; we say that $(\learn, \unlearn)$ is a perfect {\em learning-unlearning} pair if this holds. A trivial way to achieve perfect learning-unlearning is by setting $\unlearn(\learn(\Du \circ \Dr), \Du, \Dr) = \learn(\Dr)$. However, this method of unlearning is not desirable as it can be computationally inefficient to perform learning on $\Dr$ from scratch. Instead, it is desired that $\unlearn(\learn(\Du \circ \Dr), \Du, \Dr)$ has a computational complexity that is significantly less compared to that of $\learn(\Dr)$.





We will now consider a specific realization of the learning and unlearning algorithms as follows.

\begin{description}[leftmargin=0mm,topsep=0pt,itemsep=0pt]
\item [Data Distribution.] We consider the data distribution $P_{h^*}$ defined over $\mathbb{R}^d \times \{-1, 1\}$, parameterized by $h^* \in \mathbb{R}^d$, obtained by sampling $x \sim \gN(0, \frac1d I_d)$ and setting $y = \sign(\langle h^*, x\rangle)$.

We will consider the high dimensional regime where $n \ll d$.

\item [Loss Function.] We use the $0$-$1$ loss function for halfspaces, $\ell(h; x, y) := \indicator\{y \ne \sign(\langle h, x \rangle)\}$.

\item[Learning Algorithm.] We consider a {\em perceptron-like}\footnote{The {\em Perceptron} algorithm~\citep{rosenblatt1958perceptron} operates on the examples one at a time, by choosing an example $(x_i, y_i)$ that is misclassified by the current predictor $h$, and adds $y_i x_i$ to $h$. Here, we instead perform a single step on all examples at once.} learning algorithm, defined as $\learn(\Dt) := \sum_{(x,y) \in \Dt} y_i \cdot x_i$.

\item[Unlearning.] We consider the unlearning algorithm
$\unlearn(\hat{h}, \Du, \Dr) := \hat{h} - \sum_{(x,y) \in \Du} y \cdot x$. Observe that in fact, the algorithm does not even use $\Dr$, and thus is more efficient to compute than computing $\learn(\Dr)$ from scratch.

\item[{\boldmath $\eps$-Perturbation}.] We say that $\gD' = ((x_1', y_1), \ldots, (x_n', y_n))$ is an {\em $\eps$-perturbation} of $\gD = ((x_1, y_1), \ldots, (x_n, y_n))$ if for all $i$, it holds that $\|x_i' - x_i\|_2 \le \eps$.
\end{description}

We now present the main result of this section that shows the existence of an $\eps$-perturbation of the forget set that makes the unlearning algorithm return a predictor that misclassifies all examples in the retain set.

\begin{theorem}[Adversarial Forget Sets]\label{thm:adv-forget-theory}
For all $\eps, \beta > 0$ and $n < \sqrt{d / \log(d/\beta)}$ there exists $m = O(\sqrt{n} / \eps)$ such that for $\Dt$ sampled i.i.d. from $P_{h*}$ with $|\Dt| = n$, $\Du$ being a randomly chosen subset of $\Dt$ with $|\Du| = m$ and $\Dr := \Dt \setminus \Du$, all of the following hold with probability $1-\beta$ over randomness of sampling the data:
\begin{enumerate}[leftmargin=*]
\item $\loss(\learn(\Dt), \Dt) = 0$. In other words, the learned predictor achieves perfect accuracy on the training dataset.
\item $\unlearn(\learn(\Du \circ \Dr), \Du, \Dr) = \learn(\Dr)$. In other words, $(\learn, \unlearn)$ forms a perfect learning-unlearning pair.
\item There exists an $\eps$-perturbation $\Du'$ of $\Du$ such that
$\loss(h'; \Dr) = 1$ for $h' := \unlearn(\learn(\Dt), \Du', \Dr)$. In other words, when provided a perturbation of $\Du$, the predictor after unlearning misclassifies the entire retain set $\Dr$.
\end{enumerate}
\end{theorem}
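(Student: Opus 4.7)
The plan is to exploit the linearity of both \learn{} and \unlearn{}, which reduces part (3) to a single vector-norm budget question. For any $\eps$-perturbation $\Du' = \{(x_i + \delta_i, y_i)\}_{i \in \Du}$ of $\Du$, a direct substitution gives
\[
  \unlearn(\learn(\Dt), \Du', \Dr) \;=\; \learn(\Dt) \;-\; \sum_{i \in \Du} y_i (x_i + \delta_i) \;=\; \learn(\Dr) \;-\; \sum_{i \in \Du} y_i \delta_i,
\]
so the set of models reachable by an $\eps$-perturbation of $\Du$ is exactly $\{\learn(\Dr) + v : \|v\|_2 \le m\eps\}$ (one can route any target $v$ evenly across the $m$ coordinates $\delta_i$). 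Part (2) is an immediate consequence (take all $\delta_i = 0$).

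For part (1), I would invoke standard Gaussian concentration: $\|x_j\|_2^2 \in 1 \pm O(\sqrt{\log(n/\beta)/d})$, and by sub-exponential concentration for products of Gaussians together with a union bound, $|\langle x_i, x_j\rangle| \le O(\sqrt{\log(n^2/\beta)/d})$ for all $i \ne j$, with probability $1-\beta$. Writing $y_j \langle \learn(\Dt), x_j\rangle = \|x_j\|_2^2 + \sum_{i \ne j} y_i y_j \langle x_i, x_j\rangle$, the cross-term has magnitude at most $O(n \sqrt{\log(n^2/\beta)/d})$, which is strictly less than $1$ under the hypothesis $n < \sqrt{d/\log(d/\beta)}$, so the sign is positive and training accuracy is perfect.

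For part (3), the construction is the following: pick any constant $\kappa > 1$ (say $\kappa = 2$), and set $\delta_i := \tfrac{\kappa}{m}\, y_i\, \learn(\Dr)$ for each $i \in \Du$. Since $y_i^2 = 1$, $\sum_{i \in \Du} y_i \delta_i = \kappa\, \learn(\Dr)$, so $h' = (1 - \kappa)\, \learn(\Dr)$, which has the \emph{opposite} sign of $\learn(\Dr)$ along every input direction. Applying part (1) to the smaller dataset $\Dr$ (which still satisfies $|\Dr| < \sqrt{d/\log(d/\beta)}$) yields $\sign(\langle \learn(\Dr), x_j\rangle) = y_j$ for every $(x_j, y_j) \in \Dr$, and hence $\sign(\langle h', x_j\rangle) = -y_j$, giving $\loss(h'; \Dr) = 1$. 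The budget constraint reduces to $\|\delta_i\|_2 = \tfrac{\kappa}{m} \|\learn(\Dr)\|_2 \le \eps$, i.e., $m \ge \kappa\, \|\learn(\Dr)\|_2 / \eps$.

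The main technical step is therefore bounding $\|\learn(\Dr)\|_2 = O(\sqrt{n})$ with high probability, and this is where I expect the main obstacle to lie. I would decompose $y_i x_i = \mu + (y_i x_i - \mu)$ with $\mu := \mathbb{E}[y_i x_i] = \sqrt{2/(\pi d)}\, h^*/\|h^*\|$, which has norm $O(1/\sqrt{d})$; summing over $n-m$ terms contributes a mean component of norm $O(n/\sqrt{d}) = o(1)$ under $n \ll \sqrt{d}$, while the fluctuations are a sum of independent sub-Gaussian vectors of bounded norm, yielding $\|\sum_i (y_i x_i - \mu)\|_2 = O(\sqrt{n})$ with high probability via, e.g., Hanson--Wright applied to the quadratic form $\|\sum_i (y_i x_i - \mu)\|_2^2$. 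The delicate point is that $y_i$ depends on $x_i$, but conditioning on the signs $y_i$ and using that $x_i \mid y_i$ is a half-space-truncated Gaussian with bounded sub-Gaussian norm makes the argument go through cleanly. Plugging in this bound gives $m = \Theta(\sqrt{n}/\eps)$, matching the theorem's statement.
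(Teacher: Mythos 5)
Your proof is correct, and its skeleton matches the paper's: the same Gaussian concentration bounds on $\|x_i\|_2^2$ and $|\langle x_i, x_j\rangle|$ give part (1), and part (2) is the same one-line linearity observation. The interesting difference is in part (3). The paper perturbs each forget example by a fixed $\eps$-step along $\hat h = \learn(\Dt)$, i.e.\ $x' = x + \eps y\,\hat h/\|\hat h\|_2$, so the unlearned predictor becomes $\learn(\Dr) - \tfrac{\eps m}{\|\hat h\|_2}\hat h$, and misclassification of $\Dr$ follows because the subtracted term dominates the (order-one) margins of $\learn(\Dr)$ once $m = O(\sqrt n/\eps)$. Your construction instead spends the whole budget to hit the target vector exactly, producing $h' = (1-\kappa)\learn(\Dr) = -\learn(\Dr)$ for $\kappa = 2$, so misclassification of $\Dr$ is an immediate corollary of the part-(1) argument applied to $\Dr$ under the same high-probability events; this is arguably cleaner, since the paper's claim that $\tilde h = c\hat h$ with $c<0$ is slightly imprecise (the term $\sum_{(x,y)\in\Du} y\,x$ is not parallel to $\hat h$) and really needs the margin-domination argument just sketched. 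One place where you over-engineer: the bound $\|\learn(\Dr)\|_2 = O(\sqrt n)$ does not require the mean-plus-fluctuation decomposition, conditioning on the labels, or Hanson--Wright. Expanding the square and using the same pairwise bounds already conditioned on gives $\|\learn(\Dr)\|_2^2 \le \sum_i \|x_i\|_2^2 + \sum_{i\ne j}|\langle x_i,x_j\rangle| \le n\bigl(1+o(1)\bigr) + O\bigl(n^2\sqrt{\log(n/\beta)/d}\bigr) = O(n)$ under $n < \sqrt{d/\log(d/\beta)}$, regardless of how the labels correlate with the inputs; the paper records exactly this bound for $\|\hat h\|_2$ ``as an aside'' in its proof of part (1), and it applies verbatim to the sub-sum over $\Dr$. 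So the step you flagged as the main obstacle is actually free given the concentration events you already invoked.
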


We rely on the following concentration bound on the norms and inner products of random Gaussian vectors, which follow from simple applications of Bernstein's inequality for sub-exponential random variables.
\begin{fact}[See, e.g., \cite{vershynin2018hdp}]\label{fact:gaussian-inner-product-concentration}
For $x, y \sim \gN(0, \sigma^2 I_d)$, it holds that
\begin{align*}
\Pr[|\|x\|_2^2 - d \sigma^2| > \eps d \sigma^2] &~\le~ 2 \mathrm{exp}\left( - \eps^2 d / 4\right)\,,\\
\Pr[|\langle x, x' \rangle| > \eps d \sigma^2] &~\le~ 2 \mathrm{exp}\left(- \eps^2 d/ 4\right)\,.
\end{align*}
\end{fact}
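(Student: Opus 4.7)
The plan is to derive both inequalities as standard Chernoff/Bernstein bounds, exploiting the fact that the random variables $x_i^2$ (for the first bound) and $x_i x_i'$ (for the second bound) are i.i.d.~sub-exponential. The overall strategy is the same in each case: compute (or bound) the moment generating function of the centered summand, apply Markov's inequality to the exponential, and optimize the free parameter $\lambda$. The only reason to split the two bounds is a minor algebraic difference in the MGF.

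For the first bound, I would write $\|x\|_2^2 / \sigma^2 = \sum_{i=1}^d (x_i/\sigma)^2 \sim \chi^2_d$, so the problem reduces to a standard chi-squared tail. Let $Y_i := (x_i/\sigma)^2 - 1$; these are i.i.d., mean-zero, and have MGF $\E[e^{\lambda Y_i}] = e^{-\lambda}(1-2\lambda)^{-1/2}$ for $\lambda < 1/2$. A Taylor expansion gives $-\lambda - \tfrac12\log(1-2\lambda) \le \lambda^2$ on, say, $|\lambda| \le 1/4$, hence $\E[e^{\lambda \sum_i Y_i}] \le e^{d\lambda^2}$ on that range. Chernoff's inequality with $\lambda = \eps/4$ (and the symmetric lower tail handled identically via $-Y_i$, for which the corresponding MGF bound also holds on $\lambda \in [0,1/4]$) then yields $\Pr[|\sum_i Y_i| > \eps d] \le 2\exp(-\eps^2 d/4)$ for $\eps \le 1$; the case $\eps > 1$ is easier because the bound becomes trivial after adjusting the constant or invoking the linear-regime Bernstein term.

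For the second bound, my preferred route is to symmetrize via the orthogonal rotation $u_i := (x_i + x_i')/\sqrt{2}$, $v_i := (x_i - x_i')/\sqrt{2}$, which are i.i.d.~$\gN(0,\sigma^2 I_d)$ and independent of each other by the standard fact that orthogonal transformations of i.i.d. Gaussians preserve i.i.d. Gaussianity. A brief calculation gives $\langle x, x'\rangle = \tfrac12(\|u\|_2^2 - \|v\|_2^2)$, reducing the bound to a difference of two independent $\chi^2_d$ random variables. Applying the MGF bound from the previous step separately to each of $\|u\|_2^2/\sigma^2$ and $\|v\|_2^2/\sigma^2$, using independence to multiply MGFs, and optimizing $\lambda$ yields the required $2\exp(-\eps^2 d/4)$ tail. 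An alternative is to condition on $x'$ so that $\langle x, x'\rangle \mid x' \sim \gN(0, \sigma^2 \|x'\|_2^2)$, apply the Gaussian tail, and then intersect with the event that $\|x'\|_2^2 \le (1+\eps')d\sigma^2$ from the first bound; this is conceptually cleanest but introduces a slightly worse constant that needs tightening.

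The main obstacle is matching the precise constant $1/4$ in the exponent rather than the generic $c$ that a black-box Bernstein statement would give. This forces the Taylor expansion of $-\lambda - \tfrac12\log(1-2\lambda)$ (or equivalently of $\tfrac12\log(1-4\lambda^2)$ for the rotated version) to be carried out carefully, with explicit control over the remainder on the chosen range of $\lambda$. Once that algebraic step is pinned down, the remainder is routine Chernoff optimization.
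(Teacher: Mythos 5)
The paper never actually proves this Fact: it is stated with a pointer to \cite{vershynin2018hdp} and the remark that both bounds ``follow from simple applications of Bernstein's inequality for sub-exponential random variables.'' Your self-contained Chernoff derivation is therefore a genuinely different (more explicit) route than anything in the paper. The overall plan is sound: the chi-squared MGF handles the norm, the orthogonal rotation $u=(x+x')/\sqrt{2}$, $v=(x-x')/\sqrt{2}$ with $\langle x,x'\rangle=\tfrac12(\|u\|_2^2-\|v\|_2^2)$ handles the inner product, and both yield two-sided tails of the form $2\exp(-c\,\eps^2 d)$, which is all that the application in \Cref{thm:adv-forget-theory} requires (there $\eps$ is of order $\sqrt{\log(n/\beta)/d}$, hence small).

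There is, however, one concretely wrong step, and it sits exactly at the point you identify as the crux. The inequality $-\lambda-\tfrac12\log(1-2\lambda)\le\lambda^2$ fails for \emph{every} $\lambda>0$: the left-hand side equals $\sum_{k\ge 2}(2\lambda)^k/(2k)=\lambda^2+\tfrac43\lambda^3+\cdots$, which strictly exceeds $\lambda^2$. It holds only for $\lambda\le 0$, i.e., for the lower tail. For the upper tail the usable bound is $-\lambda-\tfrac12\log(1-2\lambda)\le\lambda^2/(1-2\lambda)\le 2\lambda^2$ on $[0,1/4]$, which after optimizing gives $\exp(-\eps^2 d/8)$; the fully optimized Chernoff exponent is $-\tfrac{d}{2}\bigl(\eps-\log(1+\eps)\bigr)$, which is strictly larger than $-\eps^2 d/4$ for all $\eps>0$ because $\eps-\log(1+\eps)<\eps^2/2$. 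So the constant $1/4$ in the first displayed inequality cannot be recovered by this (or any Chernoff) argument for the upward deviation of $\|x\|_2^2$ --- this is arguably a defect of the Fact as stated rather than of your strategy, since \cite{vershynin2018hdp} gives an unspecified absolute constant and any $c>0$ suffices downstream. The inner-product half is fine: the rotated MGF is $(1-\lambda^2)^{-d/2}$, whose logarithm is bounded by $d\lambda^2$ on a wide range, so optimizing at $\lambda=\eps/2$ gives $\exp(-\eps^2 d/4)$ (and nearly $\exp(-\eps^2 d/2)$ for small $\eps$). I would either prove the Fact with an unspecified constant $c$ or restrict the range of $\eps$ and weaken the norm bound's constant, rather than trying to force $1/4$ through the Taylor expansion.
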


\begin{proof}[Proof of \Cref{thm:adv-forget-theory}]
We first note that for $x_1, \ldots, x_n \sim \gN(0, \frac1d I_d)$, we have from Fact~\ref{fact:gaussian-inner-product-concentration} that with probability $1-\beta$, all of the following hold:
\begin{align}
\|x_i\|_2^2 &\textstyle~\ge~ 1 - \sqrt{\frac{\log(n/\beta)}{d}}\,,\label{eq:conc-1}\\
\|x_i\|_2^2 &\textstyle~\le~ 1 + \sqrt{\frac{\log(n/\beta)}{d}}\,,\label{eq:conc-2}\\
|\ang{x_i, x_j}| &\textstyle~\le~ 3 \sqrt{\frac{\log (n/\beta)}{d}}.\label{eq:conc-3}
\end{align}

We now proceed to prove each of the claimed parts, by conditioning on Eqs. (\ref{eq:conc-1}), (\ref{eq:conc-2}), and (\ref{eq:conc-3}) holding. We use the notation $\Dt = ((x_1, y_1), \ldots, (x_n, y_n))$.
\begin{enumerate}[leftmargin=*]
\item We have $\learn(\Dt) = \hat{h} := \sum_{i} y_i \cdot x_i$.
From Eqs. (\ref{eq:conc-1}) and (\ref{eq:conc-3}), we have that with probability $1 - \beta$, it holds for all $(x, y) \in \Dt$, that
\begin{align*}
y \cdot \ang{\hat{h}, x}
&\textstyle~=~ \|x\|_2 + \sum_{(x', y') \in \Dt \setminus \{(x, y)\}} y y' \cdot \ang{x, x'}\\
&\textstyle~\ge~ 1 - \sqrt{\frac{\log(n/\beta)}{d}} - 3 (n-1) \sqrt{\frac{\log (n/\beta)}{d}} > 0,
\end{align*}
where we use that $n < O(\sqrt{d / \log (d / \beta)})$. Thus, with probability $1 - \beta$, we have that all examples in $\Dt$ are correctly classified by $\hat{h}$.

As an aside, we note that under Eqs. (\ref{eq:conc-2}) and (\ref{eq:conc-3}), it also holds that
\begin{align*}
\|\hat{h}\|^2 &\textstyle~=~ \sum_{i} \|x_i\|^2 + \sum_{i\ne j} y_i y_j \cdot \ang{x_i, x_j}\\
&\textstyle~\le~ n \cdot \left(1 + \sqrt{\frac{\log(n/\beta)}{d}} \right) + \frac{n(n-1)}{2} \cdot \sqrt{\frac{\log(n/\beta)}{d}} ~\le~ O(n).
\end{align*}

\item This is immediate since $\unlearn(\learn(\Dt), \Du, \Dr) = \sum_{(x,y) \in \Dt} y \cdot x - \sum_{(x, y) \in \Du} y \cdot x = \sum_{(x, y) \in \Dr} y \cdot x$, which is precisely $\learn(\Dr)$.

\item We consider the following $\eps$-perturbation $\Du'$ of $\Du$, obtained by including $(x' = x + \eps y \cdot \frac{\hat{h}}{\|\hat{h}\|}, y)$ in $\Du'$ for all $(x, y) \in \Du$. Note that this requires only knowledge of $\hat{h}$ and the forget set $\Du$; we do not need any information about the examples in $\Dr$.

The unlearning algorithm on this perturbed set would return: $\tilde{h} = \hat{h} - \sum_{(x, y) \in \Du} y\cdot x - \frac{\eps m}{\|\hat{h}\|_2} \hat{h}$. Thus, for a suitable $m = O(\sqrt{n} / \eps)$, we then have that $\tilde{h} = c \hat{h}$ for $c < 0$, and hence $y_i \cdot \ang{\tilde{h}, x_i} < 0$ for all $(x_i, y_i) \in \Dt$, and in particular, the returned predictor $\tilde{h}$ will misclassify the entire retain set.\qedhere
\end{enumerate}
\end{proof}

\subsection{Discussion}\label{subsec:theory-discussion}
In summary, \Cref{thm:adv-forget-theory} shows that for the learning-unlearning pair considered, small perturbations on a forget set that is much smaller than the training set size can cause the unlearning algorithm to return a predictor that misclassifies the entire training set. However, we note the following limitations of \Cref{thm:adv-forget-theory}, that we leave for future work to address.

First and foremost, the theorem is about a linear model, which does not capture the complex non-linearity of a neural network. Thus, it does not immediately provide any specific insight into the experiments we perform on neural networks in this paper.

Another limitation, is that we do not show that the learned predictor achieves small population loss. This is in fact impossible in the regime when $n \ll d$ as we consider. It would more desirable to have a learning-unlearning setting where the learning algorithm also achieves small population loss. However, we do note that the learning algorithm we consider is still ``reasonable'' in the sense that it does have generalization guarantees with $n \gg d$, on the family of distributions $\{P_{h^*} : h^* \in \R^d\}$ that we consider.

We suspect that the reason why the phenomenon occurs in neural networks is that small perturbations in feature (e.g., pixels) space can cause larger perturbations to the corresponding gradients. Whereas, in the setting of \Cref{thm:adv-forget-theory}, the attack arguably arises because the retain set examples are quite diverse from each other, and so a small forget set of size only $O(\sqrt{n}/ \eps)$ can perform a ``coordinated attack'' to push the predictor in a bad region.

Finally, another gap between our theoretical example and our experiments is that in the latter, we did not need a large forget set to execute an attack. In particular, the unlearning algorithm used {\em averaged} gradients over the forget set, so the attack could not have benefited just from the forget set being large.

\end{document}